\documentclass{article}%
\usepackage[margin=1.65 in]{geometry}
\usepackage{amsmath}
\usepackage[linkcolor=blue,colorlinks=true]{hyperref}
\usepackage{amsfonts}
\usepackage{amssymb}
\usepackage{graphicx}
\usepackage{diagbox}
\usepackage{subfig}
\usepackage{epsfig}%
\usepackage{color}
\usepackage{multirow}
\usepackage{setspace}
\usepackage{hyperref}
\usepackage{enumitem}
\usepackage{hhline}
\usepackage{algorithm}
\usepackage{algorithmic}
\usepackage{pbox}
\usepackage{booktabs}
\usepackage{array}
\usepackage[flushleft]{threeparttable}
\usepackage{authblk}
\usepackage{float}
\usepackage{supertabular}
\usepackage[title]{appendix}
\usepackage{authblk}
\setcounter{MaxMatrixCols}{30}
\newtheorem{theorem}{Theorem}

\newtheorem{lemma}[theorem]{Lemma}

\newenvironment{proof}[1][Proof]{\noindent\textbf{#1.} }{\ \rule{0.5em}{0.5em}}
\begin{document}
\doublespacing

\title{\textbf{A Bayesian Semiparametric Gaussian Copula Approach to a Multivariate Normality Test}}   


\author[1]{Luai Al-Labadi\thanks{{\em Corresponding author:} luai.allabadi@utoronto.ca}}

\author[2]{Forough Fazeli Asl\thanks{forough.fazeli@math.iut.ac.ir}}

\author[2]{Zahra Saberi\thanks{ z\_saberi@cc.iut.ac.ir}}

\affil[1]{Department of Mathematical and Computational Sciences, University of Toronto Mississauga, Mississauga, Ontario L5L 1C6, Canada.}
\affil[2]{Department of Mathematical Sciences, Isfahan University of Technology, Isfahan 84156-83111, Iran.}

\date{}
\maketitle

\pagestyle {myheadings} \markboth {} {A BSPGC approach to a MVN test.}

\begin{abstract}
In this paper, a Bayesian semiparametric copula approach is used to model the underlying multivariate distribution $F_{true}$. First, the Dirichlet process is constructed on the unknown marginal distributions of $F_{true}$. Then a Gaussian copula model is utilized to capture the dependence structure of $F_{true}$.  As a result, a Bayesian multivariate normality test is developed by combining the relative belief ratio and the Energy distance. Several interesting theoretical results of the approach are derived. Finally, through several simulated examples and a real data set, the proposed approach reveals excellent performance.
\par

 \vspace{9pt} \noindent\textsc{Keywords:} Dirichlet process, Energy distance, Multivariate normality test, Relative belief inferences, Semiparametric Gaussian copula model.

 \vspace{9pt}

\noindent { \textbf{MSC 2010}} 62F15, 62G10, 62H15

\end{abstract}
	
\section{Introduction}
Semiparametric copulas are useful tools in multivariate data analysis. They are used for modelling a multivariate distribution whose dependence structure is induced by a known copula and whose marginal distributions are estimated; see, for example, Sancetta and Satchell (2004), Segers et al. (2014) and the references therein.  We point out to the interesting  work of Rosen and Thompson (2015) who proposed a semiparametric methodology for modeling a multivariate distribution whose dependence structure is induced by a Gaussian copula and whose marginal distributions are estimated nonparametrically via mixtures of B-spline densities.  The authors take a Bayesian approach, using Markov chain Monte Carlo methods for inference.

In the present paper, a Bayesian Semiparametric copula approach based on the Dirichlet process and the Gaussian copula is proposed to model the underlying multivariate distribution $F_{true}$. In addition, recognizing that  many recent applications of research are developed based on the assumption of multivariate normality (Fernandez, 2010 and Zhu et al., 2014), a test  to assess this assumption is developed. Recent procedures tackling this problem can be found in Kim and Park (2018), Madukaife and Okafor (2018), Henze and Visagie (2019) and Al-Labadi et al. (2019a). We highlight that, while most available works in the area of the hypothesis testing using copula approaches are related to assess  independence  (Genest and R\'{e}millard, 2004; Kojadinovic and Holmes, 2009; Medovikov, 2016; Belalia et al., 2017), the proposed test is Bayesian and considers modeling the dependence structure and the marginal behaviors of the data separately to assess the multivariate normality assumption. Briefly, all univariate marginal distributions of $F_{true}$ are assumed to have the Dirichlet process to define posterior-based and prior-based models of $F_{true}$.  A Gaussian copula model is then utilized to induce the dependence structure. The test follows by comparing the concentration of the posterior-based model to the concentration of the prior-based model about the family of multivariate normal distributions (hypothesized model) via the so-called relative belief ratio.  In this comparison, a Bayesian counterpart of the Energy distance is developed. The proposed test is easy to implement with a powerful performance and allows to state evidence for or against the null hypothesis. Also, unlike the test presented in Al-Labadi et al. (2019a),  which is restricted to assess the family of multivariate normal distributions for the hypothesized model, the proposed test can be extended to assess all families of multivariate distributions (model checking problems).

The rest of the paper is structured as follows. A relevant background containing some definitions and generic properties are reviewed in Section 2. In Section 3, a Bayesian semiparametric Gaussian copula approach based on the Dirichlet process is proposed for modeling multivariate distributions. The choice of the hyperparameter of the Dirichlet process and the estimation method of the parameter of the Gaussian copula are discussed in Section. In Section 5, a Bayesian multivariate normality (MVN) test based on the proposed approach and the Energy distance is developed. The main steps of a computational algorithm to implement the MVN test are outlined in Section 6. The performance of the approach and its application to the MVN test is clarified through some simulation studies and a real data example in Section 7. The results show that the proposed test works well in all covered cases and it is very powerful. Finally, Section 8 concludes the paper with a summary of the results. Some notations related to the Section 7 are given in the Appendix.

\section{Relevant background}
\subsection{Copula-based Model}

In multivariate analysis, copula models are introduced by Sklar (1959) as a common tool to model multivariate distributions. Following Nelsen (2006),
an $m$-dimensional copula ($m$-copula) is a nondecreasing and right continuous function $C$ from $[0,1]^{m}$ into $[0,1]$ such that, for every
$\mathbf{u}=(u_{1},\ldots,u_{m})\in [0,1]^{m}$
\begin{enumerate}[nolistsep,label=(\roman*),leftmargin=\parindent,align=left,labelwidth=\parindent,labelsep=0pt]
\item
$C(u_{i},\ldots,u_{i-1},0,u_{i+1},\ldots,u_{m})=0$, for $i=1,\ldots,m$ ($C$ is grounded).
\item
$C(1,\ldots,1,u_{i},1,\ldots,1)=u_{i}$, for $i=1,\ldots,m$ ($C$ has margins).
\item
\hspace{0.1mm} For every $\mathbf{v}=(v_{1},\ldots,v_{m})\in [0,1]^{m}$ such that $u_{i}\leq v_{i}$ for all $i$, the $C$-volume
$V\left([\mathbf{u},\mathbf{v}]\right)\geq 0$ ($C$ is $m$-increasing), where $m$-cube $[0,1]^{m}$ is $m$ product of $[0,1]$ and $[\mathbf{u},\mathbf{v}]$ is the $m$-box $[u_{1},v_{1}]\times\cdots\times[u_{m},v_{m}]$.
\end{enumerate}

From (iii), it is obvious that every $m$-copula $C$ is nondecreasing in each variable and satisfies in the Lipschitz condition. That is, for every point $\mathbf{u}, \mathbf{v}\in [0,1]^{m}$,
\begin{equation}\label{Lipschitz}
\big| C(\mathbf{u})-C(\mathbf{v}) \big| \leq \sum_{i=1}^{m}|u_{i}-v_{i}|.
\end{equation}

\noindent Hence, any $m$-copula $C$ is uniformly continuous on $[0,1]^{m}$. 

The following key theorem of Sklar (1959) illustrates the role of the $m$-copulas to model the multivariate distribution functions through their univariate margins.

\begin{theorem}\label{sklar's Thm}
\textbf{(Sklar's theorem)} Let $F$ be an $m$-variate distribution function with marginal distribution functions $F_{1},\ldots,F_{m}$. Then there exists an $m$-copula $C$ such that for all $\mathbf{t}\in \overline{\mathbb{R}}^{m}$
\begin{equation}\label{sklar}
F(t_{1},\ldots, t_{m})=C\left(F_{1}(t_{1}),\ldots, F_{m}(t_{m})\right),
\end{equation}
where $\overline{\mathbb{R}}^{m}$ is $m$ product of the extended real line $[-\infty,\infty]$. If $F_{1},\ldots,F_{m}$ are all continuous, then $C$ is unique and can be written as
 \begin{equation*}
 C(\mathbf{u})=F\left(F^{-1}_{1}(u_{1}),\ldots, F_{m}^{-1}(u_{m})\right),
 \end{equation*}
for any $\mathbf{u}\in[0,1]^{m}$ where $F^{-1}_{i}(u_{i})=\inf\lbrace t\in \mathbb{R}|\, F_{i}(t)\geq u_{i}\rbrace$, otherwise; $C$ is uniquely determined on
$Ran (F_{1})\times \cdots \times Ran (F_{m})$, where $Ran (F_{i})$ denotes the range of the distribution function $F_{i}$ for $i=1,\ldots,m$. Conversely, if $C$ is an $m$-copula and $F_{1},\ldots,F_{m}$ are distribution functions, then $F$, defined by \eqref{sklar}, is an $m$-variate distribution function with marginal distribution functions $F_{1},\ldots,F_{m}$.
\end{theorem}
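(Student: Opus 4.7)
The plan is to prove Sklar's theorem in three stages: construction of $C$ (first when all $F_i$ are continuous, then in general), uniqueness, and the converse. In the continuous case I would simply set
\[ C(\mathbf{u}) := F\bigl(F_1^{-1}(u_1),\ldots,F_m^{-1}(u_m)\bigr), \qquad \mathbf{u}\in[0,1]^m, \]
and check the three copula axioms directly. Grounding follows because $F_i^{-1}(0)=-\infty$ and an $m$-variate CDF vanishes in that limit; the margin property uses $F_i\bigl(F_i^{-1}(u_i)\bigr)=u_i$, which relies on continuity of $F_i$; and the $m$-increasing condition is preserved under the coordinate-wise nondecreasing substitution $t_i\mapsto F_i^{-1}(u_i)$. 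The identity \eqref{sklar} is then read off by evaluating $C$ at $\bigl(F_1(t_1),\ldots,F_m(t_m)\bigr)$ and using $F_i^{-1}\bigl(F_i(t_i)\bigr)=t_i$ (again via continuity, up to a null set treated by right-continuity of $F$).

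The general case, where some $F_i$ may have jumps or flats, is where the real work lies and I expect it to be the main obstacle. The formula above no longer factors $F$ on all of $\overline{\mathbb{R}}^m$, so I would take the standard two-step route: first define $C$ only on the set $S:=Ran(F_1)\times\cdots\times Ran(F_m)$ by the forced identity $C\bigl(F_1(t_1),\ldots,F_m(t_m)\bigr):=F(t_1,\ldots,t_m)$, and then extend by continuity to $[0,1]^m$. Well-definedness on $S$ is the delicate point: if $F_i(t)=F_i(t')$ for $t<t'$, the marginal mass on $(t,t']$ is zero, which by a Fr\'echet-type bound on the joint distribution forces $F$ to take the same value when $t_i$ is replaced by $t_i'$ (the other coordinates held fixed). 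Once $C$ is consistently defined on $S$, the telescoping argument $|F(\mathbf{t})-F(\mathbf{t}')|\le \sum_i|F_i(t_i)-F_i(t_i')|$ shows that $C$ is $1$-Lipschitz on $S$, as in \eqref{Lipschitz}, hence uniformly continuous, and extends uniquely to a $1$-Lipschitz function on the closure of $S$; since each $Ran(F_i)\cup\{0,1\}$ is dense in $[0,1]$ once the natural limits are appended, that closure is all of $[0,1]^m$. The grounded, margin, and $m$-increasing axioms transfer from $S$ to $[0,1]^m$ because each is a continuous relation that holds on a dense subset.

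Uniqueness on $S$ is immediate from the defining identity, and when every $F_i$ is continuous each $Ran(F_i)$ equals $[0,1]$ by the intermediate value theorem, so any two candidate copulas agree on all of $[0,1]^m$ at once; this also yields the explicit quantile formula for $C$ stated in the theorem. For the converse, given an $m$-copula $C$ and univariate CDFs $F_1,\ldots,F_m$, I would verify that $F(\mathbf{t}):=C\bigl(F_1(t_1),\ldots,F_m(t_m)\bigr)$ is an $m$-variate distribution function: right-continuity and the limits as $t_i\to\pm\infty$ follow from right-continuity of each $F_i$ combined with the (Lipschitz) continuity of $C$ and the grounded/margin properties of $C$; and $F$ inherits the $m$-increasing property from $C$ by a direct chain-of-differences computation, using that $t_i\mapsto F_i(t_i)$ is nondecreasing so that boxes in $\overline{\mathbb{R}}^m$ map into boxes in $[0,1]^m$.
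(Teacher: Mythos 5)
The paper itself offers no proof of this statement: it is quoted as a classical result of Sklar (1959), with Nelsen (2006) as the standard reference, so there is nothing internal to compare your argument against. Judged on its own terms, your outline of the continuous case, the uniqueness discussion, and the converse are all sound and standard.

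There is, however, a genuine gap exactly where you anticipate the ``real work'' to be. Your extension from $S:=Ran(F_1)\times\cdots\times Ran(F_m)$ to $[0,1]^m$ rests on the claim that each $Ran(F_i)\cup\{0,1\}$ is dense in $[0,1]$, so that the closure of $S$ is all of $[0,1]^m$ and uniform continuity suffices. That claim is false whenever some $F_i$ has a jump: if $F_i$ jumps from $0.3$ to $0.7$ at some point, the open interval $(0.3,0.7)$ meets $Ran(F_i)$ nowhere, and for a Bernoulli marginal the range is just $\{0,\tfrac12,1\}$. Consequently the Lipschitz extension to $\overline{S}$ leaves most of $[0,1]^m$ uncovered, and the object you have built is only a \emph{subcopula}. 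The missing ingredient is the extension lemma for subcopulas (Nelsen 2006, Lemma 2.3.5): one fills the gaps by multilinear interpolation between the nearest points of $\overline{Ran(F_i)}$ in each coordinate, and then verifies --- this is the nontrivial computation --- that the interpolated function is still $m$-increasing, grounded, and has uniform margins. Without that step the existence half of the theorem is proved only for continuous marginals; with it, your plan becomes the standard complete proof.
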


In general, the dependence structure of multivariate distributions is modeled by the copula. For this purpose, some families of copulas have been developed. See, for example, the work of Joe (1997), Trivedi and Zimmer (2005), and Nelsen (2006). One instance of special interest is the family of \emph{Gaussian copulas}. Beside that it satisfies both Fr\'{e}chet-Hoeffding lower and upper bounds (Nelsen 2006, Theorem 2.10.2), it has only one dependence parameter restricted to the symmetric interval $[-1,1]$, which makes it simple to apply. Formally, let $\Phi^{-1}$ be the inverse of the cumulative distribution function (cdf) of the univariate standard normal distribution and $\Phi_{R}$ be the cdf of the multivariate normal distribution with zero mean vector $\mathbf{0}_{m}$ and correlation matrix $R=(r_{ij})$ for $1\leq i,j\leq m$, then the family of Gaussian copulas is defined by
\begin{equation*}\label{gaussian-cop}
\left\lbrace C_{R}(\mathbf{u}):=\Phi_{R}( \Phi^{-1}(u_{1}),\ldots, \Phi^{-1}(u_{m}))|\hspace{.2cm} \, r_{ij}\in[-1,1]\right\rbrace,
\end{equation*}
\noindent where $\mathbf{u}\in [0,1]^{m}$. Following Chen et al. (2006), assume that, for any $\mathbf{u}\in [0,1]^{m}$, $C(\mathbf{u})=C_{R}(\mathbf{u})$, then the multivariate distribution $F$ is of a \emph{semiparametric Gaussian copula model} $F(t_{1},\ldots,t_{m})=C_{R}(F_{1}(t_{1}),\ldots, F_{m}(t_{m}))$, with unknown parameter $R$ and unknown marginal cdf $F_{i}$, for $i=1\ldots m$. A detailed discussion about the semiparametric Gaussian copula model will be presented in Section 3 based on using the Dirichlet process.

The following algorithm shows the steps of generating a sample of random vectors from an $m$-variate distribution $F$ with marginal cdf's $F_{1},\ldots,F_{m}$ using a Gaussian copula model with correlation matrix $R$ .
\begin{algorithm}
\caption{Generating a sample from an $m$-variate distribution $F$ using Gaussian copula model}\label{alg1}
\begin{enumerate}
\item Generate a random vector $(y_{1},\ldots,y_{m})$ from an $m$-variate normal distribution with mean vector $\mathbf{0}_{m}$ and correlation matrix of $R$. \textbf{\smallskip}
\item Compute $u_{i}=\Phi(y_{i})$, for $i=1,\ldots,m$. \textbf{\smallskip}
\item Compute vector $(x_{1},\ldots,x_{m})$ such that $x_{i}=F^{-1}_{i}(u_{i})$, for $i=1,\ldots,m$.
\item Repeat steps 1-3 for $k$ times to generate a sample of size $k$ from distribution $F$.\textbf{\smallskip}
\end{enumerate}
\end{algorithm}
\subsection{Dirichlet Process}
The Dirichlet process prior, introduced by Ferguson (1973), is the
most commonly used prior in Bayesian nonparametric inferences. A remarkable collection of nonparametric inferences have been devoted to this prior. Here we only present the most relevant
definitions and properties of this prior. Consider a space
$\mathfrak{X}$ with a $\sigma$-algebra $\mathcal{A}$ of subsets of
$\mathfrak{X}$, let $H$ be a fixed probability measure on $(\mathfrak{X}%
,\mathcal{A}),$ called the \emph{base measure}, and $a$ be a positive number,
called the \emph{concentration parameter}. A random probability measure
$P=\left\{  P(A):A\in\mathcal{A}\right\}  $ is called a Dirichlet process on
$(\mathfrak{X},\mathcal{A})$ with parameters $a$ and $H,$ denoted by $P\sim
{DP}(a,H),$ if for every measurable partition $A_{1},\ldots,A_{k}$ of
$\mathfrak{X} $ with $k\geq2\mathfrak{,}$ the joint distribution of the vector
$\left(  P(A_{1}),\ldots\,P(A_{k})\right)$ has the Dirichlet distribution with parameter
$aH(A_{1}),\ldots,$ $aH(A_{k})$. Also, it is assumed that
$H(A_{j})=0$ implies $P(A_{j})=0$ with probability one. Consequently, for any
$A\in\mathcal{A}$, $P(A)\sim$ beta$(aH(A),a(1-H(A)))$,
${E}(P(A))=H(A)\ $and ${Var}(P(A))=H(A)(1-H(A))/(1+a).$ Accordingly, the base measure $H$
plays the role of the center of $P$ while the concentration parameter $a$ controls the variation of
 $P$ around  $H$. One of the most well-known
properties of the Dirichlet process is the conjugacy property. That is, when the sample $x=(x_{1},\ldots,x_{n})$
is drawn from $P\sim DP(a,H)$, the posterior distribution of $P$ given $x$,
denoted by $P^{\ast}$, is also a Dirichlet process with
concentration parameter $a+n$ and base measure
\begin{equation}\label{pos base measure}
H^{\ast}=a(a+n)^{-1}H+n(a+n)^{-1}F_{n},
\end{equation}
where $F_{n}$ denotes the empirical cumulative distribution function (cdf) of the sample
$x$. Note that, $H^{\ast}$ is a convex combination
of the base measure $H$ and the empirical cdf $F_{n}$. Therefore, $H^{\ast}\rightarrow H$ as
$a\rightarrow\infty$ while $H^{\ast}\rightarrow F_{n}$ as $a\rightarrow0.$ A guideline
about choosing the hyperparameters $a$ and $H$ will be covered in Section 4. Following Ferguson (1973), $P\sim{DP}(a,H)\ $can be represented as
\begin{equation}
P=\sum_{i=1}^{\infty}L^{-1}(\Gamma_{i}){\delta_{Y_{i}}/}\sum_{i=1}^{\infty
}{{L^{-1}(\Gamma_{i})}}, \label{series-dp}%
\end{equation}
where $\Gamma_{i}=E_{1}+\cdots+E_{i}$ with $E_{i}\overset{i.i.d.}{\sim}%
$\ exponential$(1),Y_{i}\overset{i.i.d.}{\sim}H$ independent of the
$\Gamma_{i},L^{-1}(y)=\inf\{x>0:L(x)\geq y\}$ with $L(x)=a\int_{x}^{\infty
}t^{-1}e^{-t}dt,x>0,$ and ${\delta_{a}}$ the Dirac delta measure. The series representation
(\ref{series-dp}) implies that the Dirichlet process is a discrete probability
measure even for the cases with an absolutely continuous base measure $H$. Note
that, by imposing the weak topology, the support of the Dirichlet process could
be quite large. Recognizing the complexity when working with \eqref{series-dp}, Zarepour and Al-Labadi (2012) proposed the following finite representation as an efficient method to simulate the Dirichlet process. They showed that the Dirichlet process $P\sim DP(a,H)$ can be approximated by
\begin{equation}\label{approx of DP}
P_{N}=\sum_{i=1}^{N}J_{i,N}\delta_{Y_{i}},
\end{equation}
with the
monotonically decreasing weights
$J_{i,N}=\frac{G_{a/N}^{-1}(\frac{\Gamma_{i}}{\Gamma_{N+1}})}{\sum_{j=1}^{N}G_{a/N}^{-1}(\frac{\Gamma_{i}}{\Gamma_{N+1}})},$
where $\Gamma_{i}$ and $Y_{i}$ are defined as before, $N$ is a positive large
integer and $G_{a/N}$ denotes the complement-cdf of the $\text{gamma}(a/N,1)$ distribution.
Note that, $G^{-1}_{a/N}(p)$ is the $(1-p)$-th quantile of the $\text{gamma}(a/N,1)$ distribution.
The following
algorithm describes how the approximation \eqref{approx of DP} can be used to generate
a sample from $DP(a,H)$.

\begin{algorithm}
\caption{Approximately generating a sample from $DP(a,H)$}
\begin{enumerate}
\item Fix a large positive integer $N$ and generate i.i.d. $Y_{i}\sim H$ for $i=1,\ldots,N$.

\item For $i=1,\ldots,N+1$, generate i.i.d. $E_{i}$ from the exponential distribution with rate 1, independent
of $\left(Y_{i}\right)_{1\leq i\leq N}$ and put $\Gamma_{i}=E_{1}+\cdots+E_{i}$.
\item Compute $G_{a/N}^{-1}\left(  {\Gamma_{i}}/{\Gamma_{N+1}}\right)  $ for
$i=1,\ldots,N$ and return $P_{N}.$
\end{enumerate}
\end{algorithm}
The Dirichlet process can also be obtained from the following finite mixture models developed by Ishwaran and Zarepour (2002). Let $P_{N}$ has the from given \eqref{approx of DP} with $(J_{i,N})_{1\leq i\leq N}\sim$ Dirichlet$(a/N,\ldots,a/N)$. Then $E_{P_{N}}(g)\rightarrow E_{P}(g)$ in distribution as $N\rightarrow\infty$, for any measurable function $g: \mathbb{R}\rightarrow \mathbb{R}$ with $\int_{\mathbb{R}}|g(x)|\, H(dx)<\infty$ and $P\sim DP(a,H)$. In particular, $(P_{N})_{N\geq 1}$ converges in distribution to $P$, where $P_{N}$ and $P$ are random values in the space $M_{1}(\mathbb{R})$ of probability measures on $\mathbb{R}$ endowed with the topology of weak convergence. To generate $(J_{i,N})_{1\leq i\leq N}$ put $J_{i,N}=G_{i,N}/\sum_{i=1}^{N}G_{i,N}$, where $(G_{i,N})_{1\leq i\leq N}$ is a sequence of i.i.d. gamma$(a/N, 1)$ random variables independent of $(Y_{i})_{1\leq i\leq N}$. This form of approximation leads to some results in Section 5.
\subsection{Relative Belief Inferences}

The relative belief ratio, developed by Evans (2015), becomes a widespread measure of statistical evidence. See, for example, the work of Al-Labadi and Evans (2018), Al-Labadi et al. (2017, 2018), and Al-Labadi et al. (2019a, 2019b) for implementation of the relative belief ratio on different stimulating univariate hypothesis testing problems. In details, let $\{f_{\theta}:\theta\in\Theta\}$ be a collection of densities on a sample space $\mathfrak{X}$ and let $\pi$ be a prior on the parameter space $\Theta$. Note that
the densities may represent discrete or continuous probability measures but they are
all with respect to the same support measure  $d\theta$. After
observing the data $x,$ the posterior distribution of $\theta$, denoted by $\pi(\theta\,|\,x)$, is a revised prior and is given by the
density $\pi(\theta\,|\,x)=\pi(\theta)f_{\theta}(x)/m(x)$, where $m(x)=\int
_{\Theta}\pi(\theta)f_{\theta}(x)\,d\theta$ is the prior predictive density of
$x.$  For a parameter of interest $\psi=\Psi(\theta),$ let $\Pi_{\Psi}$ be
the marginal prior probability measure and $\Pi_{\Psi}(\cdot|\,x)$ be
the marginal posterior probability measure. It is assumed that
$\Psi$ satisfies regularity conditions
so that the prior density $\pi_{\Psi}$ and the posterior density
$\pi_{\Psi}(\cdot\,|\,x)$ of $\psi$ exist with respect to some support measure on the range space for $\Psi$
. The relative belief ratio for a value
$\psi$ is then defined by $RB_{\Psi}(\psi\,|\,x)=\lim_{\delta\rightarrow0}%
\Pi_{\Psi}(N_{\delta}(\psi\,)|\,x)/\Pi_{\Psi}(N_{\delta}(\psi\,)),$ where
$N_{\delta}(\psi\,)$ is a sequence of neighborhoods of $\psi$ converging
nicely to $\psi$ as $\delta\rightarrow0$ (Evans, 2015). When $\pi_{\Psi}$ and  $\pi_{\Psi}(\cdot\,|\,x)$ are continuous at $\psi,$ the relative belief ratio is defined by
\begin{equation*}
RB_{\Psi}(\psi\,|\,x)=\pi_{\Psi}(\psi\,|\,x)/\pi_{\Psi}(\psi), \label{relbel}%
\end{equation*}
the ratio of the posterior density to the prior density at $\psi.$  Therefore,
$RB_{\Psi}(\psi\,|\,x)$ measures the change in the belief of $\psi$ being the true value from a \textit{priori} to a \textit{posteriori}.

Since $RB_{\Psi}(\psi\,|\,x)$ is a measure of the evidence that $\psi$ is the true value, if $RB_{\Psi}(\psi\,|\,x)$ $>1$, then the probability of the $\psi$ being the true value from a priori to a posteriori is increased, consequently there is evidence based on the data that $\psi$ is the true value. If $RB_{\Psi}(\psi\,|\,x)<1$, then the probability of the $\psi$ being the true value from a priori to a posteriori is decreased. Accordingly, there is evidence against based on the data that $\psi$ being the true value. For the case $RB_{\Psi}(\psi\,|\,x)=1$ there is no
evidence either way.

Obviously, $RB_{\Psi}(\psi_{0}\,|\,x)$ measures the evidence of the hypothesis $\mathcal{H}_{0}:\Psi(\theta)=\psi_{0}$. Large values of $RB_{\Psi}(\psi_{0}\,|\,x)=c$ provides
 strong evidence in favor of $\psi_{0}$. However, there may also exist other
values of $\psi$ that had even larger increases. Thus, it is also necessary, however, to calibrate whether this is strong or weak evidence for
or against $\mathcal{H}_{0}.$ A typical
calibration of $RB_{\Psi}(\psi_{0}\,|\,x)$  is given by the  \textit{strength}
\begin{equation}
\Pi_{\Psi}\left[RB_{\Psi}(\psi\,|\,x)\leq RB_{\Psi}(\psi_{0}\,|\,x)\,|\,x\right].
\label{strength}%
\end{equation}
The value in \eqref{strength} indicates that the posterior probability that the true value of $\psi$ has a relative
belief ratio no greater than that of the hypothesized value $\psi_{0}.$ Noticeably, (\ref{strength}) is not a p-value as it has a very different
interpretation. When $RB_{\Psi}(\psi_{0}\,|\,x)<1$, there is evidence
against $\psi_{0},$ then a small value of (\ref{strength}) indicates
 strong evidence against $\psi_{0}$. On the other hand, a large value for \eqref{strength}    indicates   weak evidence against $\psi_{0}$.
Similarly, when $RB_{\Psi}(\psi_{0}\,|\,x)>1$, there is  evidence in favor
of $\psi_{0},$ then a small value of (\ref{strength}) indicates  weak
evidence in favor of $\psi_{0}$, while a large value of \eqref{strength} indicates
 strong evidence in favor of $\psi_{0}$.
\subsection{Energy Distance}
The \emph{Energy distance}, presented by Sz\'{e}kely (2003), is an appropriate tool to determine the equality of distributions. In general, the Energy distance between two $m$-variate distribution function $F$ and $G$ is defined by
\begin{equation}\label{E-distance-1}
d_{\mathcal{E}}(F,G)=2E\|\mathbf{X}-\mathbf{Y}\|-E\|\mathbf{X}-\mathbf{X}^{\prime}\|-E\|\mathbf{Y}-\mathbf{Y}^{\prime}\|,
\end{equation}
where $\mathbf{X},\mathbf{X}^{\prime}\overset{i.i.d.}{\sim} F$, $\mathbf{Y},\mathbf{Y}^{\prime}\overset{i.i.d.}{\sim} G$ and $\|\mathbf{a}\|=\sqrt{\mathbf{a}^{T}\mathbf{a}}$ denotes  Euclidean norm of vector $\mathbf{a}=(a_{1},\ldots,a_{m})$. Sz\'{e}kely and Rizzo (2013) showed that $d_{\mathcal{E}}(F,G)\geq 0$ such that equality holds if and only if $F=G$. Note that, from (Sz\'{e}kely, 2003), the Energy distance \eqref{E-distance-1} is rotation invariant. This property makes it appropriate for testing goodness-of-fit problems in higher dimensions. Specifically, let  $G$ be the hypothesized distribution and $\mathbf{x}_{m\times n}=(\mathbf{x}_{1},\ldots,\mathbf{x}_{n})$ be the observed sample from $F$. Then, the one sample Energy distance corresponding to \eqref{E-distance-1} is defined by
\begin{equation}\label{E-distance-2}
d_{\mathcal{E},n}(F,G)=\frac{2}{n}\sum_{i=1}^{n}E||\mathbf{x}_{i}-\mathbf{Y}||-\frac{1}{n^{2}}\sum_{\ell,m=1}^{n}||\mathbf{x}_{\ell}-\mathbf{x}_{m}||-E||\mathbf{Y}-\mathbf{Y}^{\prime}||,
\end{equation}
where $\mathbf{x}_{i}\in\mathbb{R}^{m}$, for $i=1,\ldots,n$, and the expectations are taken with respect to the distribution $G$. The special important case occurs when $G$ is a multivariate normal distribution where the $\mathsf{R}$ package \textbf{energy} is usually used for implementing \eqref{E-distance-2}. For further discussion about Energy distance consult Sz\'{e}kely (2003) and Sz\'{e}kely and Rizzo (2013).

\section{A Bayesian Semiparametric Gaussian Copula Approach for Modeling Multivariate Distributions}
In this section, we propose a Bayesian semiparametric copula approach based on the Gaussian copula as a flexible model for modeling multivariate distributions. For a brevity, we refer to this procedure as the BSPGC (Bayesian semiparametric Gaussian copula) approach. Specifically, let $\mathbf{x}_{m\times n}=(\mathbf{x}_{1},\ldots,\mathbf{x}_{n})$ be a sample of size $n$ from an unknown $m$-variate distribution $F_{true}$ with maginal cdf's $F_{1},\ldots,F_{m}$. Note that, the subscript $m\times n$ may be omitted whenever it is clear in the context.
To model $F_{true}$ based on the BSPGC approach we use the prior $F_{i}\sim DP(a,H_{i})$, where $H_{i}$ is the $i$-th marginal cdf of a given $m$-variate distribution $H$. So, by \eqref{pos base measure}, for a given choice of $a$, $F^{\ast}_{i}=F_{i}|\mathbf{x}_{i}\sim DP(a+n, H_{i}^{\ast})$ for $i=1,\ldots,m$.  Consider the joint cdf $H^{\ast}$ corresponding to marginal cdf's $H_{1}^{\ast},\ldots,H_{m}^{\ast}$ with correlation matrix $R^{\ast}$. Then, the \emph{posterior-based model} is defined by
\begin{equation}\label{pos-BSPGC}
F^{\ast}(t_{1},\ldots,t_{m})=C_{R^{\ast}}(F^{\ast}_{1}(t_{1}),\ldots, F^{\ast}_{m}(t_{m})).
\end{equation}
The next lemma shows that $F^{\ast}$ approaches to the true distribution $F_{true}$ when the sample size increases.
\begin{lemma}\label{post as n to inf}
Let $\mathbf{x}_{m\times n}$ be a sample from $m$-variate distribution function $F_{true}$ with unknown marginal cdf's $F_{1},\ldots,F_{m}$. Assume that $F^{\ast}_{i}\sim DP(a+n, H_{i}^{\ast})$, for $ i=1,\ldots,m$. For any $\mathbf{t}=(t_{1},\ldots,t_{m})\in \mathbb{R}^{m}$, $C_{R^{\ast}}(F^{\ast}_{1}(t_{1}),\ldots, F^{\ast}_{m}(t_{m}))\xrightarrow{a.s.} F_{true}(\mathbf{t})$ as $n\rightarrow\infty$.
\end{lemma}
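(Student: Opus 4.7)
The plan is to decompose the claim into two pieces: (i) posterior consistency of each univariate marginal $F_i^{*}$ at the point $t_i$, and (ii) consistency of the Gaussian-copula correlation matrix $R^{*}$. I then glue the two pieces together using the copula Lipschitz bound \eqref{Lipschitz} together with the joint continuity of $\Phi_{R}$ in $R$ and in its argument.

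First I would handle the marginals. By Dirichlet-process conjugacy $F_i^{*}\sim DP(a+n,H_i^{*})$ with $H_i^{*}$ given in \eqref{pos base measure}. At a fixed $t_i$ the quantity $F_i^{*}(t_i)$ is $\mathrm{beta}((a+n)H_i^{*}(t_i),\,(a+n)(1-H_i^{*}(t_i)))$, hence has mean $H_i^{*}(t_i)$ and conditional variance bounded by $1/\{4(a+n+1)\}$. The Glivenko--Cantelli theorem gives $F_{i,n}(t_i)\to F_i(t_i)$ a.s., and since $a/(a+n)\to 0$, this forces $H_i^{*}(t_i)\to F_i(t_i)$ a.s. Combining the vanishing bias with the vanishing variance through Chebyshev's inequality and a Borel--Cantelli argument along a suitable subsequence yields $F_i^{*}(t_i)\to F_i(t_i)$ a.s. for each $i$.

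Second, I would treat the dependence parameter. Since $R^{*}$ is the correlation matrix attached to the posterior base measure $H^{*}$, whose marginals $H_i^{*}$ themselves concentrate on the true $F_i$, the construction discussed in Section~4 gives $R^{*}\to R_{0}$ a.s., where $R_{0}$ is the correlation matrix of the Gaussian copula of $F_{true}$. I would then combine the two pieces by the triangle inequality
\[
\bigl|C_{R^{*}}(F_1^{*}(t_1),\ldots,F_m^{*}(t_m))-F_{true}(\mathbf{t})\bigr|\leq T_1+T_2,
\]
where $T_1=|C_{R^{*}}(F_1^{*}(t_1),\ldots,F_m^{*}(t_m))-C_{R^{*}}(F_1(t_1),\ldots,F_m(t_m))|$ and $T_2=|C_{R^{*}}(F_1(t_1),\ldots,F_m(t_m))-C_{R_{0}}(F_1(t_1),\ldots,F_m(t_m))|$. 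Applying \eqref{Lipschitz} to the copula $C_{R^{*}}$ bounds $T_1\leq \sum_{i=1}^{m}|F_i^{*}(t_i)-F_i(t_i)|\to 0$ a.s. by step (i), while continuity of $\Phi_{R}(\mathbf{z})$ in $R$ at the fixed vector $\mathbf{z}=(\Phi^{-1}(F_i(t_i)))_{i=1}^{m}$ together with step (ii) gives $T_2\to 0$ a.s. Since by Sklar's theorem (Theorem~\ref{sklar's Thm}) $F_{true}(\mathbf{t})=C_{R_{0}}(F_1(t_1),\ldots,F_m(t_m))$, the conclusion follows.

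The main obstacle I expect is step (i): the $O(1/n)$ variance bound is not directly summable, so the naive Borel--Cantelli argument fails and one must either appeal to sharper Beta deviation inequalities to obtain summable tails, or pass through a coupling along a subsequence and interpolate. A secondary subtlety is ensuring the convergence in step (ii) is almost sure rather than merely in probability, which in turn needs a stable specification of $R^{*}$ from the posterior base measure; once that is in hand, the Lipschitz--plus--continuity gluing step is essentially mechanical.
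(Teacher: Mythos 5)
Your decomposition routes the argument through the true marginals $F_i$ and a limiting correlation matrix $R_0$, and this is where the proof has a genuine gap. The term $T_2$ requires that $F_{true}(\mathbf{t})=C_{R_0}(F_1(t_1),\ldots,F_m(t_m))$ for some correlation matrix $R_0$; Sklar's theorem only guarantees the existence (and, for continuous margins, uniqueness) of \emph{some} copula linking $F_{true}$ to its margins, not a Gaussian one, and the lemma places no such assumption on $F_{true}$ --- indeed the methodology is meant to apply to alternatives whose copula is not Gaussian, for which no such $R_0$ exists. You additionally need $R^{\ast}\to R_0$ almost surely, a consistency statement about the rank-based estimator of Section 4 applied to samples from the $n$-dependent mixture $H^{\ast}$, which is nowhere established and is not trivial. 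The paper's proof avoids both issues with a different triangle inequality: it compares $C_{R^{\ast}}(F^{\ast}_{1}(t_{1}),\ldots,F^{\ast}_{m}(t_{m}))$ to $H^{\ast}(\mathbf{t})$, written via Sklar's theorem as $C_{R^{\ast}}(H^{\ast}_{1}(t_{1}),\ldots,H^{\ast}_{m}(t_{m}))$ --- two evaluations of the \emph{same} copula, so the Lipschitz bound \eqref{Lipschitz} applies and no convergence of $R^{\ast}$ is ever needed --- and then uses that $H^{\ast}=a(a+n)^{-1}H+n(a+n)^{-1}F_{n}\rightarrow F_{true}$ pointwise a.s.\ directly from \eqref{pos base measure} and the strong law, with no copula structure imposed on $F_{true}$.

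On the marginal step you correctly diagnose the obstacle: the $O(1/n)$ Chebyshev tail is not summable over $n$. The paper's resolution is exactly the subsequence device you allude to: it substitutes $n=k^{2}+b$ so the bound becomes $1/(4k^{2}\epsilon^{2})$, which is summable in $k$, and applies Borel--Cantelli along that parametrization (the sharper beta deviation inequality you suggest would give tails summable over all $n$ and is arguably cleaner). Note also that the paper centers $F^{\ast}_{i}(t_{i})$ at its exact posterior mean $H^{\ast}_{i}(t_{i})$, so only the beta variance enters; your choice to center at $F_{i}(t_{i})$ forces you to carry the extra bias term $H^{\ast}_{i}(t_{i})-F_{i}(t_{i})$, which is harmless (it vanishes a.s.\ by Glivenko--Cantelli) but unnecessary under the paper's decomposition. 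To repair your argument you would need either to add the hypothesis that $F_{true}$ follows a Gaussian copula model together with a proof that $R^{\ast}\to R_{0}$ a.s., or to switch to the paper's choice of intermediate point $H^{\ast}(\mathbf{t})$.
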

\begin{proof}
For any $\mathbf{t}\in\mathbb{R}^{m}$, the triangle inequality implies
\begingroup\makeatletter\def\f@size{9}\check@mathfonts
  \begin{align}\label{triangular}
\big|C_{R^{\ast}}(F^{\ast}_{1}(t_{1}),\ldots, F^{\ast}_{m}(t_{m}))-F_{true}(\mathbf{t})\big|&\leq\big|C_{R^{\ast}}(F^{\ast}_{1}(t_{1}),\ldots, F^{\ast}_{m}(t_{m}))-H^{\ast}(\mathbf{t})\big|\nonumber\\
&+ \big|H^{\ast}(\mathbf{t})-F_{true}(\mathbf{t})\big|=I_{1}+I_{2}.
\end{align}
\endgroup
From \eqref{pos base measure}, for any $\mathbf{t}\in\mathbb{R}^{m}$, $H^{\ast}(\mathbf{t})\xrightarrow{a.s.} F_{true}(\mathbf{t})$ as $n\rightarrow\infty$. Hence, the continuous mapping theorem implies $I_{2}\xrightarrow{a.s.} 0$ as $n\rightarrow\infty$. On the other hand, from Sklar's theorem \ref{sklar's Thm} and Lipschitz condition \eqref{Lipschitz}, we have
\begin{small}
\begin{align*}
I_{1}&=
\big|C_{R^{\ast}}(F^{\ast}_{1}(t_{1}),\ldots, F^{\ast}_{m}(t_{m}))-C_{R^{\ast}}(H^{\ast}_{i}(t_{1}),\ldots, H^{\ast}_{i}(t_{m}))\big|\\&\leq\sum_{i=1}^{m}\big| (F^{\ast}_{i}(t_{i})-H^{\ast}_{i}(t_{i})\big|.
\end{align*}
\end{small}
Note that, from the property of the Dirichlet process, for any $t_{i}\in\mathbb{R}$ and $\epsilon>0$, Chebyshev's inequality implies
\begin{align*}
Pr\left\lbrace \big| F^{\ast}_{i}(t_{i})-H^{\ast}_{i}(t_{i})\big|\geq\epsilon\right\rbrace
\leq\dfrac{H^{\ast}_{i}(t_{i})\left(1-H^{\ast}_{i}(t_{i})\right)}{(a+n+1)\epsilon^{2}}.
\end{align*}
Substituting $n=k^{2}+b$, for $k\in\mathbb{N}$ and $b\in\lbrace 0,1,\ldots\rbrace$, gives
\begin{align*}
Pr\left\lbrace \big| F^{\ast}_{i}(t_{i})-H^{\ast}_{i}(t_{i})\big|\geq\epsilon\right\rbrace
\leq\dfrac{1}{4\,k^{2}\epsilon^{2}}.
\end{align*}
Since $\sum_{k=1}^{\infty}k^{-2}$ converges, then $\sum_{k=1}^{\infty}Pr\left\lbrace \big| F^{\ast}_{i}(t_{i})-H^{\ast}_{i}(t_{i})\big|\geq\epsilon\right\rbrace<\infty.$ Hence, by the first Borel Cantelli lemma, $\big| F^{\ast}_{i}(t_{i})-H^{\ast}_{i}(t_{i})\big|\xrightarrow{a.s.}0$, as $k\rightarrow\infty$ or $n\rightarrow\infty$. This completes the proof.
\end{proof}
\section{Selecting $a$, $H$ and the Method of Estimation of $R^{\ast}$ in the BSPGC Approach}
The proposed method for modeling multivariate distributions depends on $a$, $H$ and $R^{\ast}$. Hence, it is necessary to look carefully at the impact of these parameters on the approach. For instance, from \eqref{pos base measure}, a large value of $a$ can increase the effect of the $m$-variate distribution $H$ instead of $F_{n}$ in the posterior-based model \eqref{pos-BSPGC}. The following lemma shows the effect of the value of $a$ on the model \eqref{pos-BSPGC}.
\begin{lemma}\label{effect a-pos}
Let $\mathbf{x}_{m\times n}$ be a sample from $m$-variate distribution function $F_{true}$ with unknown marginal cdf's $F_{1},\ldots,F_{m}$. Also, let $H$ be a known $m$-variate cdf with marginal cdf's $H_{1},\ldots,H_{m}$ and $H^{\ast}$ be the $m$-variate cdf, defined in \eqref{pos base measure}, with marginal cdf's $H^{\ast}_{1},\ldots,H^{\ast}_{m}$. Assume that $F^{\ast}_{i}\sim DP(a+n, H^{\ast}_{i})$, for $1\leq i\leq m$. Then, for any $\mathbf{t}\in \mathbb{R}^{m}$, $C_{R^{\ast}}(F^{\ast}_{1}(t_{1}),\ldots, F^{\ast}_{m}(t_{m}))\xrightarrow{a.s.}H(\mathbf{t})$ as $a\rightarrow\infty$.
\end{lemma}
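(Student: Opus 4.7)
The plan is to follow the same triangle-inequality strategy used to prove Lemma \ref{post as n to inf}, swapping the roles of $n$ and $a$. I would write
\[
\big|C_{R^{\ast}}(F^{\ast}_{1}(t_{1}),\ldots,F^{\ast}_{m}(t_{m}))-H(\mathbf{t})\big|\leq I_{1}+I_{2},
\]
where $I_{1}=|C_{R^{\ast}}(F^{\ast}_{1}(t_{1}),\ldots,F^{\ast}_{m}(t_{m}))-H^{\ast}(\mathbf{t})|$ and $I_{2}=|H^{\ast}(\mathbf{t})-H(\mathbf{t})|$, and then control each piece separately.

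For $I_{2}$, the marginal version of the convex-combination formula \eqref{pos base measure}, namely $H_{i}^{\ast}(t_{i})=\tfrac{a}{a+n}H_{i}(t_{i})+\tfrac{n}{a+n}F_{n,i}(t_{i})$, forces $H_{i}^{\ast}(t_{i})\to H_{i}(t_{i})$ pointwise as $a\to\infty$. Because both $H^{\ast}$ and $H$ are obtained as a Gaussian copula applied to their respective marginals, the uniform continuity of the copula (cf.\ \eqref{Lipschitz}), combined with the convergence of $R^{\ast}$ to the correlation matrix of $H$, then yields $I_{2}\to 0$. For $I_{1}$, Sklar's Theorem \ref{sklar's Thm} allows me to represent $H^{\ast}(\mathbf{t})=C_{R^{\ast}}(H_{1}^{\ast}(t_{1}),\ldots,H_{m}^{\ast}(t_{m}))$, so the Lipschitz bound \eqref{Lipschitz} immediately gives $I_{1}\leq\sum_{i=1}^{m}|F_{i}^{\ast}(t_{i})-H_{i}^{\ast}(t_{i})|$.

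It remains to argue each summand vanishes almost surely. Since $F_{i}^{\ast}(t_{i})$ is beta with mean $H_{i}^{\ast}(t_{i})$ and variance bounded by $1/\bigl(4(a+n+1)\bigr)$, Chebyshev's inequality yields a tail bound of order $1/a$. Restricting to $a=k^{2}+b$, $k\in\mathbb{N}$, $b\in\{0,1,\ldots\}$ — precisely the discretisation trick used inside Lemma \ref{post as n to inf} — turns this bound into a summable series, so the first Borel–Cantelli lemma delivers $|F_{i}^{\ast}(t_{i})-H_{i}^{\ast}(t_{i})|\xrightarrow{a.s.}0$ along this subsequence, and hence $I_{1}\xrightarrow{a.s.}0$ as $a\to\infty$.

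The principal obstacle, as I see it, is not the Dirichlet-process computation (which transcribes from Lemma \ref{post as n to inf} almost verbatim) but the bookkeeping surrounding $R^{\ast}$. Since $R^{\ast}$ is the correlation matrix attached to $H^{\ast}$ and the latter itself depends on $a$, to close the argument for $I_{2}$ one must either assume or verify that $R^{\ast}\to R$ (the correlation matrix of $H$) and that $C_{R}$ depends continuously on its correlation parameter; once this is in place the rest of the proof is a direct adaptation of Lemma \ref{post as n to inf}.
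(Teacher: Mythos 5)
Your proof follows essentially the same route as the paper, whose entire argument for this lemma is simply to repeat the proof of Lemma \ref{post as n to inf} with $F_{true}$ replaced by $H$ and $a$ replaced by $k^{2}c\rightarrow\infty$ (your discretisation $a=k^{2}+b$ plays the identical role in making the Chebyshev bound summable for Borel--Cantelli). The only remark worth making is that $I_{2}$ is simpler than you suggest: since $H^{\ast}$ is defined directly by the convex combination \eqref{pos base measure}, $H^{\ast}(\mathbf{t})\rightarrow H(\mathbf{t})$ deterministically as $a\rightarrow\infty$ with no appeal to $R^{\ast}\rightarrow R$ or to continuity of $C_{R}$ in $R$; the dependence on $R^{\ast}$ enters only through the identification $H^{\ast}(\mathbf{t})=C_{R^{\ast}}(H^{\ast}_{1}(t_{1}),\ldots,H^{\ast}_{m}(t_{m}))$ used to bound $I_{1}$, a tacit assumption the paper already makes in the proof of Lemma \ref{post as n to inf}.
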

\begin{proof}
The proof is similar to the proof of Lemma \ref{post as n to inf}. For this, assume that $a=k^{2}c$ for $k\in\lbrace 0,1,\ldots\rbrace$ and a fixed positive number $c$. For any fixed $n$, replace $F_{true}(\mathbf{t})$ and $a$, respectively by, $H(\mathbf{t})$ and $k^{2}c$ in the proof of Lemma \ref{post as n to inf}. Then the result follows.
\end{proof}\\

It follows from Lemma \ref{effect a-pos} that increasing the value of $a$ can lead to some errors. To avoid this issue, we propose to choose $a$ to be at most $0.5\,n$ as recommended in Al-Labadi and Zarepur (2017).

The choice of $H$ is also very significant and there are two main issues to reflect. The first one is the independence of the approach to the choice of $H$ (invariance property).   As pointed in Table \ref{priori-invariant}, the approach is invariance to the choice of any continuous multivariate distribution. The second issue is the compatibility between $H$ and the data. This is typically called \emph{prior-data conflict} (Evans and Moshonov, 2006; Al-Labadi and Evans, 2017,  Al-Labadi and Evans, 2018, Al-Labadi and Wang, 2019). As illustrated in Section 7.2, the existence of prior-data conflict yields to a failure of the approach and thus should be avoided. Since $E(F_{i})=H_{i}$, where $F_{i}\sim DP(a,H_{i})$ for $i=1,\ldots,m$, a reasonable choice of $H$ that ensures the avoidance of prior-data conflict  is the $m$-variate normal distribution $N_{m}(\overline{\mathbf{x}},S_{\mathbf{x}})$, where $\overline{\mathbf{x}}=1/n\sum_{i=1}^{n}\mathbf{x}_{i}$ and $S_{\mathbf{x}}=1/(n-1)\sum_{i=1}^{n}(\mathbf{x}_{i}-\overline{\mathbf{x}})(\mathbf{x}_{i}-\overline{\mathbf{x}})^{T}$.

To carry on the approach, it is essential to estimate the correlation matrix $R^{\ast}$. For this, we first generate a sample from the mixture distribution in \eqref{pos base measure}. Then, based on the generated sample, $R^{\ast}$  is estimated by one of the following three common procedures: the Gaussian correlation rank, the Kendal's $\tau$ or the Spearman's $\rho$. In Section 7, we performed a simulation study to compare the effect of these three methods on the quality of the approach. As a result,  we recommend using Kendal's $\tau$ correlation coefficients with $a=1$ in the proposed model.

\section{A MVN Test Based on the BSPGC Approach}
Let $\mathbf{x}_{m\times n}$ be a sample of size $n$ from an unknown $m$-variate  distribution $F_{true}$. The  problem to be addressed in this section is to test the hypothesis
\begin{equation}\label{test1}
\mathcal{H}_{0}:F_{true}\in \mathcal{F},
\end{equation}
\noindent where $\mathcal{F}=\{N_{m}(\boldsymbol{\mu}_{m},\Sigma_{m}):\boldsymbol{\mu}_{m}\in\mathbb{R}^{m},\det (\Sigma_{m}) >0\}$. Note that, whenever $\boldsymbol{\mu}_{m}$ and $\Sigma_{m}$ are unknown, they are to be estimated by the sample mean vector $\overline{\mathbf{x}}$ and sample covariance matrix
$S_{\mathbf{x}}$, respectively. Thus, for $\boldsymbol{\theta}_{\mathbf{x}}=(\overline{\mathbf{x}},S_{\mathbf{x}})$, $F_{\boldsymbol{\theta}_{\mathbf{x}}}=N_{m}(\overline{\mathbf{x}},S_{\mathbf{x}})$
is the best representative of the family $\mathcal{F}$ to compare with distribution $F_{true}$. Hence, testing \eqref{test1} is equivalent to test
\begin{equation}\label{test2}
\mathcal{H}_{0}:F_{true}=F_{\boldsymbol{\theta}_{\mathbf{x}}}.
\end{equation}
Now,  we continue as follows. 
Let $H=F_{\boldsymbol{\theta}_{\mathbf{x}}}$ with marginal cdf's $H_{1}=F_{\boldsymbol{\theta}_{\mathbf{x}_{1}}},\ldots,H_{m}=F_{\boldsymbol{\theta}_{\mathbf{x}_{m}}}$. Here, for $i=1,\ldots,m$, $F_{\boldsymbol{\theta}_{\mathbf{x}_{i}}}$ is the cdf of the univariate normal distribution with mean $\overline{x}_{i}$ and variance $s^{2}_{i}$, where $\overline{x}_{i}$ and $s^{2}_{i}$ are the $i$-th element of $\overline{\mathbf{x}}$ and $i$-th diagonal element of $S_{\mathbf{x}}$, respectively. Assume that $F_{i}\sim DP(a, H_{i})$. For any $\mathbf{t}\in\mathbb{R}^{m}$, let
\begin{equation}\label{pri-BSPGC}
F(t_{1},\ldots,t_{m})=C_{R_{\mathbf{x}}}(F_{1}(t_{1}),\ldots, F_{m}(t_{m})),
\end{equation}
be the \emph{prior-based model}, where $R_{\mathbf{x}}=\left(r_{\mathbf{x},ij}\right)$ is the correlation matrix of the $m$-variate distribution $F_{\boldsymbol{\theta}_{\mathbf{x}}}$ and to be estimated as discussed in Section 4. Note that, as pointed out earlier, setting  $H_{i}=F_{\boldsymbol{\theta}_{\mathbf{x}_{i}}}$ ensures compatibility between the data and the prior which will certainly avoid prior-data conflict. More details about the effect of the prior-data conflict on the approach is clarified in Section 7, where it is revealed that  the existence of prior data conflict  leads to erroneous result of the test.

Recalling  the posterior-based model as defined in Section 3, to proceed  with the approach, the energy distance is used to compute the distance between this model and $F_{\boldsymbol{\theta}_{\mathbf{x}}}$ (posterior distance) and the distance between the prior-based model and $F_{\boldsymbol{\theta}_{\mathbf{x}}}$ (prior distance). The next lemma proposes a Bayesian counterpart of the distance \eqref{E-distance-2} as an appropriate tool to measure dissimilarities between the proposed models and the null distribution. This is considered a very convenient tool for assessing MVN in high dimensional problems ($m>n$).
\begin{lemma}\label{Bayesian-lemma}
Let $F_{true}$ be an $m$-variate distribution function with unknown marginal cdf's $F_{1},\ldots,F_{m}$ and $H$ be a known $m$-variate distribution function with marginal cdf's $H_{1},\ldots,H_{m}$. Assume that $F_{i}\sim DP(a,H_{i})$, and $F_{N_i}$'s are the approximation of Dirichlet process $F_{i}$'s, given by Ishwaran and Zarepour (2002). Consider the Energy distance between $F_{N}=C_{R}(F_{N_1}(t_{1}),\ldots,F_{N_m}(t_{m}))$ and $H$ as
\begin{equation}\label{Beyesian-E-distance}
d_{\mathcal{E},N,a}(F_{N},H)=2\sum_{i=1}^{N}J_{i,N}E_{H}||\mathbf{x}_{i}-\mathbf{Y}||-\sum_{i,j=1}^{N}J_{i,N}J_{j,N}||\mathbf{x}_{i}-\mathbf{x}_{j}||-E_{H}||\mathbf{Y}-\mathbf{Y}^{\prime}||,
\end{equation}
where $(J_{i,N})_{1\leq i\leq N}\sim Dirichlet(a/N,\ldots,a/N)$, $\mathbf{Y},\mathbf{Y}^{\prime}\overset{i.i.d.}{\sim}H$, $(\mathbf{x}_{1},\ldots,\mathbf{x}_{N})$ is an observed sample from $F_{N}$ and $R$ is the correlation matrix of $H$. Then, as $a\rightarrow\infty$
\begin{equation*}
E_{F_{N}}(d_{\mathcal{E},N,a}(F_{N},H))\xrightarrow{a.s.}d_{\mathcal{E},N}(F_{N},H),
\end{equation*}
where $d_{\mathcal{E},N}(F_{N},H)$ is defined in \eqref{E-distance-2} with $F=F_{n}$, $G=H$ and $n=N.$
\end{lemma}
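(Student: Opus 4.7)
The plan is to expand $E_{F_N}(d_{\mathcal{E},N,a}(F_N,H))$ term by term and read off the limit as $a\to\infty$ using standard Dirichlet moment identities. The weight vector $(J_{i,N})_{1\le i\le N}\sim\mathrm{Dirichlet}(a/N,\ldots,a/N)$ satisfies $E(J_{i,N})=1/N$, $E(J_{i,N}^2)=1/N^2+(1/N)(1-1/N)/(a+1)$, and $E(J_{i,N}J_{j,N})=1/N^2-1/(N^2(a+1))$ for $i\ne j$; since every summand in $d_{\mathcal{E},N,a}(F_N,H)$ is at most quadratic in $(J_{i,N})$, these three formulas suffice.

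First I would dispose of the two cheap terms. The third summand $E_H\|\mathbf{Y}-\mathbf{Y}'\|$ does not involve the weights and passes through the expectation untouched. For the first summand, linearity of expectation together with $E(J_{i,N})=1/N$ immediately yields $E_{F_N}\bigl(2\sum_{i=1}^N J_{i,N}E_H\|\mathbf{x}_i-\mathbf{Y}\|\bigr)=\tfrac{2}{N}\sum_{i=1}^N E_H\|\mathbf{x}_i-\mathbf{Y}\|$, which is exactly the first term in \eqref{E-distance-2} for every value of $a$.

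All the real work lies in the quadratic term $\sum_{i,j=1}^N E(J_{i,N}J_{j,N})\|\mathbf{x}_i-\mathbf{x}_j\|$. The key observation is that the diagonal entries $\|\mathbf{x}_i-\mathbf{x}_i\|$ vanish, so the coefficient $E(J_{i,N}^2)$—which carries the fattest $1/(a+1)$ correction—is annihilated, and what remains is $\tfrac{a}{N^2(a+1)}\sum_{i\ne j}\|\mathbf{x}_i-\mathbf{x}_j\|$. Letting $a\to\infty$ the analytic factor $a/(a+1)\to 1$, and reinserting the (still-zero) diagonal produces $\tfrac{1}{N^2}\sum_{\ell,m=1}^N\|\mathbf{x}_\ell-\mathbf{x}_m\|$, matching the corresponding term of $d_{\mathcal{E},N}(F_N,H)$.

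Assembling the three pieces reconstructs $d_{\mathcal{E},N}(F_N,H)$, and the almost-sure qualifier is essentially free: the identity for $E_{F_N}(d_{\mathcal{E},N,a})$ holds on each realization of the sample $(\mathbf{x}_1,\ldots,\mathbf{x}_N)$ drawn from $F_N$, and the $a$-dependence is through the explicit deterministic factor $a/(a+1)$. The only place to be careful—and what I would flag as the main pitfall—is precisely the diagonal/off-diagonal split in the double sum: without noticing that $E(J_{i,N}^2)$ is multiplied by zero on the diagonal, one would mistakenly retain a nonvanishing $O(1/(a+1))$ residual and wrongly conclude that convergence fails.
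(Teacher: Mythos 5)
Your proof is correct and follows essentially the same route as the paper: compute $E(J_{i,N})=1/N$ and the second moments of the Dirichlet weights, take the expectation term by term to arrive at the displayed expression with the factor $a/((a+1)N^{2})$ on the double sum, and let $a\rightarrow\infty$. Your explicit remark that the diagonal terms $\|\mathbf{x}_{i}-\mathbf{x}_{i}\|=0$ absorb the discrepancy between $E(J_{i,N}^{2})$ and $E(J_{i,N}J_{j,N})$ is a point the paper glosses over (it applies the off-diagonal formula to the entire double sum), so your version is, if anything, slightly more careful.
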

\begin{proof}
From properties of Dirichlet distribution, $E_{F_{N}}(J_{i,N})=1/N$ and $E_{F_{N}}(J_{i,N}J_{j,N})$ $=a/((a+1)N^{2})$. Then
\begin{equation}\label{mean-E-dist}
E_{F_{N}}(d_{\mathcal{E},N,a}(F_{N},H))=\frac{2}{N}\sum_{i=1}^{N}E_{H}||\mathbf{x}_{i}-\mathbf{Y}||-\frac{a}{(a+1)N^{2}}\sum_{i,j=1}^{N}||\mathbf{x}_{i}-\mathbf{x}_{j}||-E_{H}||\mathbf{Y}-\mathbf{Y}^{\prime}||.
\end{equation}
The proof is immediately followed by letting $a\rightarrow\infty$ in \eqref{mean-E-dist}.
\end{proof}

The next lemma allows us to use the approximation of the Dirichlet process in the prior-based and posterior-based models for approximating the distribution of the posterior and the prior distances computed by \eqref{Beyesian-E-distance}.
\begin{lemma}
Let $F_{true}$ be an $m$-variate distribution function with unknown marginal cdf's $F_{1},\ldots,F_{m}$ and $H$ be a known $m$-variate distribution function with marginal cdf's $H_{1},\ldots,H_{m}$. Assume that $F_{i}\sim DP(a,H_{i})$ and $F_{Ni}$'s are the approximation of the Dirichlet process $F_{i}$'s, given in \eqref{approx of DP}. Then, for any $(t_{1},\ldots,t_{m})\in\mathbb{R}^{m}$, $C_{R}(F_{N1}(t_{1}),\ldots,$ $F_{Nm}(t_{m}))\xrightarrow{a.s.}C_{R}(F_{1}(t_{1}),\ldots, F_{m}(t_{m}))$, as $N\rightarrow\infty$, where $R$ is the correlation matrix of $H$.
\end{lemma}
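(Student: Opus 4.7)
The plan is to control the Gaussian copula composition using its Lipschitz continuity and thereby reduce the problem to a marginal almost-sure convergence statement for the Dirichlet process approximations. First I would invoke the Lipschitz inequality \eqref{Lipschitz} to obtain
\begin{equation*}
\left|C_{R}(F_{N1}(t_{1}),\ldots,F_{Nm}(t_{m}))-C_{R}(F_{1}(t_{1}),\ldots,F_{m}(t_{m}))\right|\leq \sum_{i=1}^{m}\left|F_{Ni}(t_{i})-F_{i}(t_{i})\right|,
\end{equation*}
so that it suffices to show $F_{Ni}(t_{i})\xrightarrow{a.s.}F_{i}(t_{i})$ as $N\to\infty$ for each fixed $i\in\{1,\ldots,m\}$ and each fixed $t_{i}\in\mathbb{R}$.

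For this marginal step I would couple the finite approximation \eqref{approx of DP} with the Ferguson series representation \eqref{series-dp} on a common probability space, by using the same sequences $(\Gamma_{j})_{j\ge 1}$ and $(Y_{i,j})_{j\ge 1}\overset{i.i.d.}{\sim}H_{i}$ in both constructions. Under this coupling, the pointwise convergence $G_{a/N}^{-1}(\Gamma_{j}/\Gamma_{N+1})\to L^{-1}(\Gamma_{j})$ a.s., together with the a.s.\ summability of $L^{-1}(\Gamma_{j})$ guaranteed by the Ferguson construction, implies that the weights $J_{j,N}$ of $F_{Ni}$ converge a.s.\ to their limiting counterparts term by term while the combined truncation-and-tail error vanishes. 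A dominated convergence argument applied to the uniformly bounded indicators $\mathbf{1}_{\{Y_{i,j}\le t_{i}\}}$ then delivers the desired marginal a.s.\ convergence.

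The main obstacle I anticipate is verifying the a.s.\ behaviour of the weights $J_{j,N}$: while Zarepour and Al-Labadi (2012) state the weak convergence $P_{N}\Rightarrow P$, the stronger termwise a.s.\ statement requires the explicit coupling described above together with a careful asymptotic analysis of $G_{a/N}^{-1}$ as $N\to\infty$, in particular showing that $G_{a/N}^{-1}(u)$ inherits the tail behaviour encoded by $L^{-1}(u)$ uniformly enough to control the error introduced by truncating the Ferguson series at level $N$. Once marginal a.s.\ convergence is secured for each of the finitely many indices $i=1,\ldots,m$, a finite union of null sets yields a.s.\ convergence of the upper bound, which by the Lipschitz inequality above transfers to the copula-level statement and completes the proof.
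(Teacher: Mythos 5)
Your proof is correct and follows essentially the same route as the paper: reduce to marginal convergence via the Lipschitz inequality \eqref{Lipschitz}, then use the almost-sure convergence $F_{Ni}(t_{i})\xrightarrow{a.s.}F_{i}(t_{i})$. The paper simply cites Zarepour and Al-Labadi (2012) for that marginal step, whereas you sketch its proof via the coupling of \eqref{approx of DP} with the series \eqref{series-dp}; this is precisely the argument of the cited reference, so the two proofs coincide in substance.
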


\begin{proof}
From Lipschitz condition \eqref{Lipschitz}, we have
\begin{small}
\begin{align*}
\big|C_{R}(F_{N1}(x_{1}),\ldots, F_{Nm}(x_{m}))-C_{R}(F_{1}(x_{1}),\ldots, F_{m}(x_{m}))\big|\leq\sum_{i=1}^{m}\big| F_{N1}(x_{i})-F_{1}(x_{i})\big|.
\end{align*}
\end{small}
Since $F_{Ni}(x_{i})\xrightarrow{a.s.}F_{i}(x_{i})$, for $1\leq i\leq m$ (Zarepur and Al-Labadi, 2012), the result follows.
\end{proof}

The procedure is continued by considering $d_{\mathcal{E},N,a}(F_{N},F_{\boldsymbol{\theta}_{\mathbf{x}}})$ as the Energy distance between the prior-based model \eqref{pri-BSPGC} and the null distribution $F_{\boldsymbol{\theta}_{\mathbf{x}}}$ using formula \eqref{Beyesian-E-distance}. Similarly, consider $d_{\mathcal{E},N,a}(F^{\ast}_{N},F_{\boldsymbol{\theta}_{\mathbf{x}}})$ for the posterior-based model \eqref{pos-BSPGC} and the null distribution. Then, the relative belief ratio is used to compare the concentration of the posterior distribution  $d_{\mathcal{E},N,a}(F^{\ast}_{N},F_{\boldsymbol{\theta}_{\mathbf{x}}})$ to the prior distribution $d_{\mathcal{E},N,a}(F_{N},F_{\boldsymbol{\theta}_{\mathbf{x}}})$ about zero. As shown in the next lemma, if $\mathcal{H}_{0}$ is true, the distribution of the posterior distance should be more concentrated about 0 than the distribution of the prior distance; otherwise, the distribution of the prior distance should be more concentrated at 0 than the distribution of posterior distance. The comparison is made by computing the relative belief ratio with the interpretation as discussed
in$\ $Section 2.
\begin{lemma}\label{lem-2}
Let $\mathbf{x}_{m\times n}$ be a sample from $m$-variate distribution function $F_{true}$ with unknown marginal cdf's $F_{1},\ldots,F_{m}$. Assume that $\boldsymbol{\theta}_{\mathbf{x}}\xrightarrow{a.s.}\boldsymbol{\theta}_{0}$ and $F_{\boldsymbol{\theta}_{\mathbf{x}}}\xrightarrow{a.s.} F_{\boldsymbol{\theta}_{0}}$ as $n\rightarrow\infty$. Let $F^{\ast}_{i}\sim DP(a+n, H^{\ast}_{i})$, for $i=1,\ldots,m$. For any $\mathbf{t}=(t_{1},\ldots,t_{m})\in \mathbb{R}^{m}$ as $n\rightarrow\infty$

\begin{enumerate}[nolistsep,label=(\roman*),leftmargin=\parindent,align=left,labelwidth=\parindent,labelsep=0pt]
\item
$
\big|C_{R^{\ast}}(F^{\ast}_{1}(t_{1}),\ldots, F^{\ast}_{m}(t_{m}))-F_{\boldsymbol{\theta}_{\mathbf{x}}}(t_{1},\ldots,t_{m})\big|\xrightarrow{a.s.}0,
$ when $\mathcal{H}_{0}$ is true.
\item
$
\liminf\big|C_{R^{\ast}}(F^{\ast}_{1}(t_{1}),\ldots, F^{\ast}_{m}(t_{m}))-F_{\boldsymbol{\theta}_{\mathbf{x}}}(t_{1},\ldots,t_{m})\big|\displaystyle{\overset{a.s.}{>}} 0,
$ when $\mathcal{H}_{0}$ is not true,
\end{enumerate}
where $R^{\ast}$ is the correlation matrix of $H^{\ast}$, defined in \eqref{pos base measure}.
\end{lemma}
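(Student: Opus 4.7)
The plan is to reduce both parts to Lemma \ref{post as n to inf} via a single triangle-inequality decomposition, using $F_{true}(\mathbf{t})$ as an intermediate point:
\begin{align*}
\bigl|C_{R^{\ast}}(F^{\ast}_{1}(t_{1}),\ldots, F^{\ast}_{m}(t_{m}))-F_{\boldsymbol{\theta}_{\mathbf{x}}}(\mathbf{t})\bigr|
&\leq \bigl|C_{R^{\ast}}(F^{\ast}_{1}(t_{1}),\ldots, F^{\ast}_{m}(t_{m}))-F_{true}(\mathbf{t})\bigr|\\
&\quad + \bigl|F_{true}(\mathbf{t}) - F_{\boldsymbol{\theta}_{\mathbf{x}}}(\mathbf{t})\bigr|.
\end{align*}
By Lemma \ref{post as n to inf}, the first summand on the right tends to $0$ almost surely as $n\to\infty$ irrespective of which hypothesis holds, so the behaviour of the left-hand side is governed entirely by the second summand.

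For part (i), under $\mathcal{H}_{0}$ the true distribution lies in $\mathcal{F}$, and by the consistency assumption $F_{\boldsymbol{\theta}_{\mathbf{x}}}\xrightarrow{a.s.}F_{\boldsymbol{\theta}_{0}}$ the limiting parameter $\boldsymbol{\theta}_{0}$ must coincide with the true parameter, giving $F_{true}=F_{\boldsymbol{\theta}_{0}}$. Hence $|F_{true}(\mathbf{t}) - F_{\boldsymbol{\theta}_{\mathbf{x}}}(\mathbf{t})|\xrightarrow{a.s.}0$, and combined with the Dirichlet-process piece above, part (i) follows.

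For part (ii), under the alternative $F_{true}\notin\mathcal{F}$, so in particular $F_{true}\neq F_{\boldsymbol{\theta}_{0}}$ as cdf's. Therefore there exists $\mathbf{t}\in\mathbb{R}^{m}$ with $\eta:=|F_{true}(\mathbf{t})-F_{\boldsymbol{\theta}_{0}}(\mathbf{t})|>0$ (this is the correct reading of the ``for any $\mathbf{t}$'' quantifier: the statement is non-trivial precisely at such a point). At this $\mathbf{t}$, the reverse triangle inequality gives
\[
\bigl|C_{R^{\ast}}(F^{\ast}_{1}(t_{1}),\ldots, F^{\ast}_{m}(t_{m}))-F_{\boldsymbol{\theta}_{\mathbf{x}}}(\mathbf{t})\bigr|
\geq \bigl|F_{true}(\mathbf{t})-F_{\boldsymbol{\theta}_{\mathbf{x}}}(\mathbf{t})\bigr| - \bigl|C_{R^{\ast}}(F^{\ast}_{1}(t_{1}),\ldots, F^{\ast}_{m}(t_{m}))-F_{true}(\mathbf{t})\bigr|.
\]
Taking $\liminf_{n\to\infty}$ of both sides, the second subtracted term vanishes by Lemma \ref{post as n to inf}, while $|F_{true}(\mathbf{t})-F_{\boldsymbol{\theta}_{\mathbf{x}}}(\mathbf{t})|$ converges a.s.\ to $\eta>0$ by the hypothesised continuity of $F_{\boldsymbol{\theta}}$ in $\boldsymbol{\theta}$ together with $\boldsymbol{\theta}_{\mathbf{x}}\xrightarrow{a.s.}\boldsymbol{\theta}_{0}$. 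This yields the strict $\liminf>0$ claim.

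The only genuinely delicate step is the justification, in part (ii), that $\boldsymbol{\theta}_{0}$ is indeed the right target under the alternative so that $F_{true}\neq F_{\boldsymbol{\theta}_{0}}$ as distributions. This is underwritten by the standing consistency assumption on $\boldsymbol{\theta}_{\mathbf{x}}$ combined with the definition of $\mathcal{F}$: since $\boldsymbol{\theta}_{\mathbf{x}}=(\overline{\mathbf{x}},S_{\mathbf{x}})$ consistently estimates $(\boldsymbol{\mu},\Sigma)$ of $F_{true}$, $F_{\boldsymbol{\theta}_{0}}$ is the Gaussian having the true moments, which agrees with $F_{true}$ only if $\mathcal{H}_{0}$ holds. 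Beyond this, the argument is a clean triangle/reverse-triangle inequality together with an appeal to Lemma \ref{post as n to inf}, so no additional machinery is required.
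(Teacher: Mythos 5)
Your proof is correct and follows essentially the same route as the paper's: a triangle inequality for (i) and a reverse triangle inequality for (ii), both reduced to the almost-sure convergence of the posterior-based model together with the convergence $F_{\boldsymbol{\theta}_{\mathbf{x}}}\xrightarrow{a.s.}F_{\boldsymbol{\theta}_{0}}$. The only cosmetic difference is that you pivot on $F_{true}(\mathbf{t})$ and invoke Lemma \ref{post as n to inf} as a black box, whereas the paper pivots on $H^{\ast}(\mathbf{t})$ and reuses the internal decomposition of that lemma's proof; your remark that the strict positivity in (ii) is meaningful only at points $\mathbf{t}$ with $F_{true}(\mathbf{t})\neq F_{\boldsymbol{\theta}_{0}}(\mathbf{t})$ is a fair, and slightly more careful, reading of the quantifier than the paper's own wording.
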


\begin{proof}
To prove (i), substitute $F_{true}$ in \eqref{triangular} by $F_{\boldsymbol{\theta}_{\mathbf{x}}}$. From \eqref{pos base measure}, for any $\mathbf{t}\in\mathbb{R}^{m}$, $H^{\ast}(\mathbf{t})\xrightarrow{a.s.} F_{true}(\mathbf{t})$ as $n\rightarrow\infty$. If $\mathcal{H}_{0}$ is true, then $F_{true}(\mathbf{t})=F_{\boldsymbol{\theta}_{0}}(\mathbf{t})$. Hence, the proof of (i) immediately follows from the proof of Lemma \ref{post as n to inf}. To prove (ii), Consider $I_{1}$ as in \eqref{triangular}. Applying the triangle inequality gives
\begin{align*}
\big|C_{R^{\ast}}(F^{\ast}_{1}(t_{1}),\ldots, F^{\ast}_{m}(t_{m}))-F_{\boldsymbol{\theta}_{\mathbf{x}}}(t_{1},\ldots,t_{m})\big|\geq \big|H^{\ast}(\mathbf{t})- F_{\boldsymbol{\theta}_{\mathbf{x}}}(\mathbf{t})\big|-I_{1}.
\end{align*}
Similar to the proof of Lemma \ref{post as n to inf}, $I_{1}\xrightarrow{a.s.}0$ and $\big|H^{\ast}(\mathbf{t})- F_{\boldsymbol{\theta}_{\mathbf{x}}}(\mathbf{t})\big|\xrightarrow{a.s.}\big|F_{true}(\mathbf{t})- F_{\boldsymbol{\theta}_{0}}(\mathbf{t})\big|$ as $n\rightarrow\infty$. Since $\mathcal{H}_{0}$ is not true, $\big|F_{true}(\mathbf{t})- F_{\boldsymbol{\theta}_{0}}(\mathbf{t})\big|\displaystyle{\overset{a.s.}{>}} 0$ which completes the proof of (ii).
\end{proof}

The effect of the value of $a$ on the posterior-based model was considered in Lemma \ref{effect a-pos}. It is also interesting to consider the effect of the value of $a$ on the proposed MVN test.
\begin{lemma}\label{effect a-pri}
Let $\mathbf{x}_{m\times n}$ be a sample from $m$-variate distribution function $F_{true}$ with unknown marginal cdf's $F_{1},\ldots,F_{m}$. Let $F_{\boldsymbol{\theta}_{\mathbf{x}}}$ be the cdf of $N_{m}(\overline{\mathbf{x}},S_{\mathbf{x}})$ with marginal cdf's $F_{\boldsymbol{\theta}_{\mathbf{x}_{1}}},\ldots,F_{\boldsymbol{\theta}_{\mathbf{x}_{m}}}$. If $F_{i}\sim DP(a,F_{\boldsymbol{\theta}_{\mathbf{x}_{i}}})$, for $i=1,\ldots,m$, then $C_{R_{\mathbf{x}}}(F_{1}(t_{1}),\ldots, F_{m}(t_{m}))$ $\xrightarrow{a.s.}F_{\boldsymbol{\theta}_{\mathbf{x}}}(\mathbf{t})$ as $a\rightarrow\infty$, where $R_{\mathbf{x}}$ is the correlation matrix of $F_{\boldsymbol{\theta}_{\mathbf{x}}}$.
\end{lemma}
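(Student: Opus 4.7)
The plan is to mirror the structure of Lemma \ref{effect a-pos} (which itself mirrors Lemma \ref{post as n to inf}), using Sklar's theorem to express the target $F_{\boldsymbol{\theta}_{\mathbf{x}}}(\mathbf{t})$ as a Gaussian-copula expression with the normal marginals, then exploiting the Lipschitz property of $C_{R_{\mathbf{x}}}$ to reduce the convergence to a marginalwise statement about Dirichlet-process random cdfs. Since the prior base measure for each $F_i$ is exactly $F_{\boldsymbol{\theta}_{\mathbf{x}_i}}$, the mean of $F_i(t_i)$ already equals the target value, so only a variance-concentration argument as $a\to\infty$ is needed.

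First, since $F_{\boldsymbol{\theta}_{\mathbf{x}}}$ is $N_m(\overline{\mathbf{x}},S_{\mathbf{x}})$ with correlation matrix $R_{\mathbf{x}}$ and univariate normal marginals $F_{\boldsymbol{\theta}_{\mathbf{x}_1}},\ldots,F_{\boldsymbol{\theta}_{\mathbf{x}_m}}$, Theorem \ref{sklar's Thm} yields
\[
F_{\boldsymbol{\theta}_{\mathbf{x}}}(\mathbf{t}) \;=\; C_{R_{\mathbf{x}}}\!\left(F_{\boldsymbol{\theta}_{\mathbf{x}_1}}(t_1),\ldots,F_{\boldsymbol{\theta}_{\mathbf{x}_m}}(t_m)\right).
\]
Next, I would apply the Lipschitz bound \eqref{Lipschitz} to obtain
\[
\left|C_{R_{\mathbf{x}}}(F_{1}(t_{1}),\ldots,F_{m}(t_{m})) - F_{\boldsymbol{\theta}_{\mathbf{x}}}(\mathbf{t})\right| \;\leq\; \sum_{i=1}^{m}\left|F_{i}(t_i)-F_{\boldsymbol{\theta}_{\mathbf{x}_i}}(t_i)\right|,
\]
which shifts the task entirely onto showing each marginal difference converges to $0$ almost surely as $a\to\infty$.

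For each $i$, the Dirichlet-process moments give $E(F_i(t_i))=F_{\boldsymbol{\theta}_{\mathbf{x}_i}}(t_i)$ and $\mathrm{Var}(F_i(t_i))=F_{\boldsymbol{\theta}_{\mathbf{x}_i}}(t_i)(1-F_{\boldsymbol{\theta}_{\mathbf{x}_i}}(t_i))/(a+1)\leq 1/\bigl(4(a+1)\bigr)$. Chebyshev's inequality then bounds
\[
\Pr\!\left\{\left|F_{i}(t_i)-F_{\boldsymbol{\theta}_{\mathbf{x}_i}}(t_i)\right|\geq\epsilon\right\} \;\leq\; \frac{1}{4(a+1)\epsilon^{2}}.
\]
Following the device used in Lemma \ref{post as n to inf}, I would index along the subsequence $a=k^{2}c$ for a fixed $c>0$, $k\in\mathbb{N}$, so the bound becomes $\tfrac{1}{4(k^{2}c+1)\epsilon^{2}}$, which is summable in $k$. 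The first Borel--Cantelli lemma then delivers $|F_{i}(t_i)-F_{\boldsymbol{\theta}_{\mathbf{x}_i}}(t_i)|\xrightarrow{a.s.}0$ along this subsequence, and since the value of $c$ was arbitrary the conclusion follows as $a\to\infty$. Summing over $i$ and invoking the Lipschitz bound above completes the proof.

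I do not anticipate any serious obstacle: because $H$ here is chosen to be $F_{\boldsymbol{\theta}_{\mathbf{x}}}$ itself, the limiting object and the prior center coincide, so (unlike the posterior version in Lemma \ref{effect a-pos}) there is no residual $H^{\ast}$-vs-$F_{true}$ discrepancy to control. The only mildly delicate point is legitimating the substitution $a=k^{2}c$ to set up Borel--Cantelli, but this is exactly the trick already deployed in the proofs of Lemmas \ref{post as n to inf} and \ref{effect a-pos}, so the argument transfers essentially verbatim.
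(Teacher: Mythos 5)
Your proposal is correct and follows essentially the route the paper intends: the paper omits this proof, deferring to Lemma \ref{effect a-pos} and ultimately to the Sklar/Lipschitz reduction plus the Chebyshev--Borel--Cantelli device along the subsequence $a=k^{2}c$ from Lemma \ref{post as n to inf}, which is exactly what you reconstruct. Your observation that the $I_{2}$ term vanishes here because the prior base measure already equals $F_{\boldsymbol{\theta}_{\mathbf{x}}}$ is a correct and slightly cleaner accounting of why this case is simpler than the posterior version.
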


\begin{proof}
The proof is similar to the proof of Lemma \ref{effect a-pos} and is omitted.
\end{proof}

Lemmas \ref{effect a-pos} and \ref{effect a-pri} show that for a too large value of $a$  (relative to the sample size) both the posterior-based and prior-based models are approaching to the null model $F_{\boldsymbol{\theta}_{\mathbf{x}}}$. Hence, the comparison between the posterior and prior distance to detect the normality can lead to an error in which we may accept $\mathcal{H}_{0}$ when it is not true and reject $\mathcal{H}_{0}$ when it is true. As recommended in Section 7, we should consider $a$ at most $0.5\,n$.

At the end of this section, it is worth pointing out that the proposed test can be extended to assess any family of multivariate distributions. For this, it is enough to consider a different family of multivariate distributions in the hypothesis \eqref{test1} and use its best representative distribution as $H$ in the methodology, which may be more challenging for some multivariate models.

\section{Main Steps for Testing the MVN}
The following computational algorithm summaries the main steps to test $\mathcal{H}_{0}$. This algorithm is viewed as a generalized version of Algorithm B of Al-Labadi and Evans (2018). Observe that, since closed forms of the densities of $D_{\mathcal{E}}=d_{\mathcal{E},N,a}(F_{N},F_{\boldsymbol{\theta}_{\mathbf{x}}})$ and  $D_{\mathcal{E}}|\mathbf{x}=d_{\mathcal{E},N,a}(F^{\ast}_{N},F_{\boldsymbol{\theta}_{\mathbf{x}}})$ are typically not available,   relative belief ratios need to be approximated via simulation. 


\noindent\rule[0.1ex]{\linewidth}{1pt}\\[-.1cm]
\textbf{Algorithm 3} \vspace{-.05cm}{Relative belief algorithm based on the BSPGC approach for testing MVN}
\\[-.08cm]\noindent\rule[0.5ex]{\linewidth}{.4pt}
\begin{small}
\begin{enumerate}
\smallskip
\item Use Algorithm 2 to generate (approximately) marginal cdf $F_{i}$'s from $ DP(a, F_{\boldsymbol{\theta}_{\mathbf{x}_{i}}})$, for $1\leq i\leq m$.

\item Generate a sample of $N$ values from the $m$-variate distribution $F_{\boldsymbol{\theta}_{\mathbf{x}}}$ and estimate the correlation matrix $R_{\mathbf{x}}$, denoted by $\widehat{R}_{\mathbf{x}}$, as discussed in  Section 4.

\item Use the generated marginal cdf's $F_{i}$ and set $R=\widehat{R}_{\mathbf{x}}$ in Algorithm 1 to get a sample of $N$ values from prior-based model \eqref{pri-BSPGC}.

\item Use \eqref{Beyesian-E-distance} for the sample generated  in steps 3 to compute the prior distance $d_{\mathcal{E},N,a}(F_{N},F_{\boldsymbol{\theta}_{\mathbf{x}}})$.

\item Repeat steps (1)-(4) to obtain a sample of $r$ values from the
prior of $D_{\mathcal{E}}$.

\item Repeat steps (1)-(5) by replacing $a$ by $a+n$, $F_{i}$ by $F^{\ast}_{i}$,  $F_{\boldsymbol{\theta}_{\mathbf{x}_{i}}}$ by $H^{\ast}_{i}$, $F_{\boldsymbol{\theta}_{\mathbf{x}}}$  by $H^{\ast}$, $R_{\mathbf{x}}$ by $R^{\ast}$, $F_{N}$ by $F^{\ast}_{N}$, $D_{\mathcal{E}}$ by $D_{\mathcal{E}}|\mathbf{x}$ and prior by posterior. This yields to a sample of $r$ values from the
posterior of $D_{\mathcal{E}}|\mathbf{x}$.

%
%
%
%
%

\item Let $M$ be a positive number. Let $\hat{F}_{D_{\mathcal{E}}}$ denote the
empirical cdf of $D_{\mathcal{E}}$ based on the prior sample in step (5) and for $i=0,\ldots,M,$
let $\hat{d}_{i/M}$ be the estimate of $d_{i/M},$ the $(i/M)$-th prior
quantile of $D_{\mathcal{E}}.$ Here $\hat{d}_{0}=0$, and $\hat{d}_{1}$ is the largest value
of $d_{\mathcal{E}}$. Let $\hat{F}_{D_{\mathcal{E}}}(\cdot\,|\mathbf{x})$ denote the empirical cdf of $D_{\mathcal{E}}|\mathbf{x}$ based
on the posterior sample in step (10). For $d\in\lbrack\hat{d}_{i/M},\hat
{d}_{(i+1)/M})$, estimate $RB_{D_{\mathcal{E}}}(d\,|\,\mathbf{x})={\pi_{D_{\mathcal{E}}}(d|\mathbf{x})}/{\pi_{D_{\mathcal{E}}}(d)}$ by
\begin{equation}
\widehat{RB}_{D_{\mathcal{E}}}(d\,|\,\mathbf{x})=M\{\hat{F}_{D_{\mathcal{E}}}(\hat{d}_{(i+1)/M}\,|\,\mathbf{x})-\hat{F}%
_{D_{\mathcal{E}}}(\hat{d}_{i/M}\,|\,\mathbf{x})\}, \label{rbest}%
\end{equation}
the ratio of the estimates of the posterior and prior contents of $[\hat
{d}_{i/M},\hat{d}_{(i+1)/M}).$ Thus, we estimate $RB_{D_{\mathcal{E}}}(0\,|\,\mathbf{x})=\frac{\pi_{D_{\mathcal{E}}}(0|\mathbf{x})}{\pi_{D_{\mathcal{E}}}(0)}$
 by $\widehat{RB}_{D_{\mathcal{E}}}(0\,|\,\mathbf{x})=M\widehat{F}_{D_{\mathcal{E}}}(\hat{d}_{p_{0}}\,|\,\mathbf{x})$ where
$p_{0}=i_{0}/M$ and $i_{0}$ is chosen so that $i_{0}/M$ is not too small
(typically $i_{0}/M\approx0.05)$.\textbf{\smallskip}
\item Estimate the strength $DP_{D_{\mathcal{E}}}\big(RB_{D_{\mathcal{E}}}(d\,|\,\mathbf{x})\leq RB_{D_{\mathcal{E}}}%
(0\,|\,\mathbf{x})\,|\,\mathbf{x}\big)$ by the finite sum
\begin{equation}
\sum_{\{i\geq i_{0}:\widehat{RB}_{D}(\hat{d}_{i/M}\,|\,\mathbf{x})\leq\widehat{RB}%
_{D}(0\,|\,\mathbf{x})\}}\big(\hat{F}_{D_{\mathcal{E}}}(\hat{d}_{(i+1)/M}\,|\,\mathbf{x})-\hat{F}_{D_{\mathcal{E}}}(\hat
{d}_{i/M}\,|\,\mathbf{x})\big). \label{strest}%
\end{equation}
\end{enumerate}
\end{small}
\noindent\rule[0.5ex]{\linewidth}{.4pt}

For fixed $M,$ as $N\rightarrow\infty,r\rightarrow\infty,$
then $\hat{d}_{i/M}$ converges almost surely to $d_{i/M}$ and (\ref{rbest})
and (\ref{strest}) converge almost surely to $RB_{D_{\mathcal{E}}}(d\,|\,\mathbf{x})$ and
$DP_{D_{\mathcal{E}}}\big(RB_{D_{\mathcal{E}}}(d\,|\,\mathbf{x})\leq RB_{D_{\mathcal{E}}}(0\,|\,\mathbf{x})\,|\,\mathbf{x}\big)$, respectively.  The consistency of the proposed test is achieved by Proposition 6 of Al-Labadi and Evans (2018).  As a recommendation, one should try different values of $a$ to make sure the right conclusion has been obtained. However, we found out that setting $a=1$ gives adequate results. More details about implementing the approach is discussed in the following section.

\section{Simulation Studies}
This section is divided into two subsections. In the first subsection, the quality of the approach to model multivariate distributions is investigated, where different choices of $a$, $H$ and $R^{\ast}$ are considered. The evaluation technique relies on using the mean of the Energy distance $\overline{d}_{\mathcal{E},N}(F^{\ast},F_{true})$ based on $r$ replications. Note that, from Lemma \ref{Bayesian-lemma}, one may consider using  the package \textbf{energy} available in $\mathsf{R}$ to compute the distance.  We generated samples each of size $n=1000$ from a variety of bivariate distributions. The notations of the used distributions are listed (Table \ref{notation}) in Appendix A. In this study, we set $N=r=1000$ in Algorithm 3 with steps (1)-(6). Note that, for the methodology to work well, we expect $\overline{d}_{\mathcal{E},N}(F^{\ast},F_{true})$ to be close to zero. In the  second subsection, the proposed test is illustrated through several examples.

\subsection{Checking the Quality of the Posterior-based Model}
The performance of the posterior-based model (i.e. the quality of estimating the model) is illustrated by  considering the bivariate distributions given in Table \ref{performance} with some choices of $a$. The results are reported based on the Kendall's correlation coefficients. From Table \ref{performance}, the close values of $\overline{d}_{\mathcal{E},N}(F^{\ast},F_{true})$ to zero indicates to the good performance of the methodology to model bivariate distributions, particularly when  $a=1$. Note that, as mentioned in Lemma \ref{effect a-pos}, with increasing the value of $a$, the accuracy of the methodology will be decreased. For more illustration, part (a) of Figure \ref{box.a.norm} gives the boxplots of the energy distance between $F_{true}=N_{2}(\mathbf{0}_{2},A_{2})$ and its corresponding $F^{\ast}$ for $a=1,\,5,\,10.$ Boxplots of marginal distributions are also given in part (b) of Figure \ref{box.a.norm}. Also, the marginal densities of $N_{2}(\mathbf{0}_{2},A_{2})$ and its $F^{\ast}$ are given in Figure \ref{fig 1}. The bivariate scatter plots are shown below the diagonal, histograms on the diagonal and the Kendall correlation above the diagonal. Correlation ellipses and loess smooths (red lines) are also shown.
\setlength{\extrarowheight}{.1mm}
\begin{table}[ht]
\centering
\setlength{\tabcolsep}{2.1 mm}
\caption{The mean of the Energy distance between the true and the posterior-based model based on the Kendall's $\tau$ for $a=1,\,5,\,10$ and $n=1000$.}\label{performance}
\scalebox{0.86}{
\begin{tabular}{ccc||ccc}
\toprule
\bfseries True distribution & $a$ &\scalebox{1.2}{$\overline{d}_{\mathcal{E},N}(F^{\ast},F_{true})$}&\bfseries True distribution & $a$ &\scalebox{1.2}{$\overline{d}_{\mathcal{E},N}(F^{\ast},F_{true})$}\\\hline

&\scalebox{1.1}{1}& \scalebox{1.1}{0.00524} & & \scalebox{1.1}{1}&\scalebox{1.1}{0.00528} \\
\scalebox{1.1}{$N_{2}(\mathbf{0}_{2},A_{2})$} &\scalebox{1.1}{5}& \scalebox{1.1}{0.00559}  &\scalebox{1.1}{$(\mathcal{P}_{VII}(1,1,1))^{2}$} & \scalebox{1.1}{5} & \scalebox{1.1}{0.00551}\\
&\scalebox{1.1}{10}& \scalebox{1.1}{0.00581} & & \scalebox{1.1}{10} & \scalebox{1.1}{0.00573}\\
\hline
&\scalebox{1.1}{1}& \scalebox{1.1}{0.00517}  & & \scalebox{1.1}{1}&\scalebox{1.1}{0.00533}\\
\scalebox{1.1}{$t_{5}(\mathbf{0}_{2},I_{2})$} &\scalebox{1.1}{5}& \scalebox{1.1}{0.00538} &\scalebox{1.1}{$E(0.5)\otimes E(0.25)$} & \scalebox{1.1}{5} & \scalebox{1.1}{0.00575}  \\
&\scalebox{1.1}{10}& \scalebox{1.1}{0.00563} & & \scalebox{1.1}{10}&\scalebox{1.1}{0.00581} \\
\hline
&\scalebox{1.1}{1}& \scalebox{1.1}{0.00564} & & \scalebox{1.1}{1}& \scalebox{1.1}{0.00520} \\
\scalebox{1.1}{$LN_{2}(\mathbf{0}_{2},B_{2})$} &\scalebox{1.1}{5}& \scalebox{1.1}{0.00568} &\scalebox{1.1}{$B(1,2)\otimes B(2,1)$} & \scalebox{1.1}{5} & \scalebox{1.1}{0.00557} \\
&\scalebox{1.1}{10}& \scalebox{1.1}{0.00597} & & \scalebox{1.1}{10} & \scalebox{1.1}{0.00563}\\
\bottomrule
\end{tabular}
}
\end{table}

\begin{figure}[ht]
 \centering
    \subfloat[]{{\includegraphics[width=5.2cm]{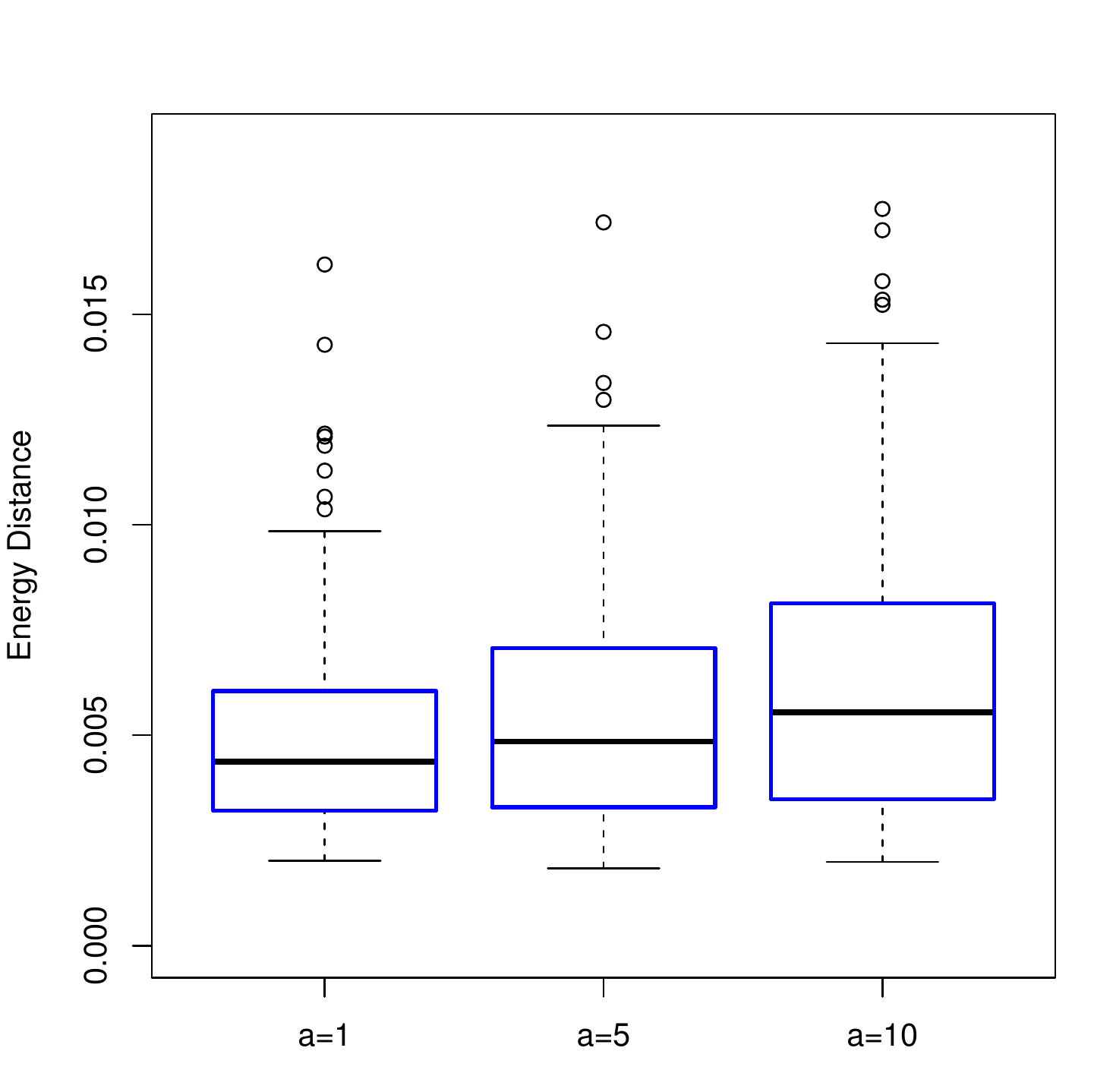} }}%
    \qquad
    \subfloat[]{{\includegraphics[width=5.2cm]{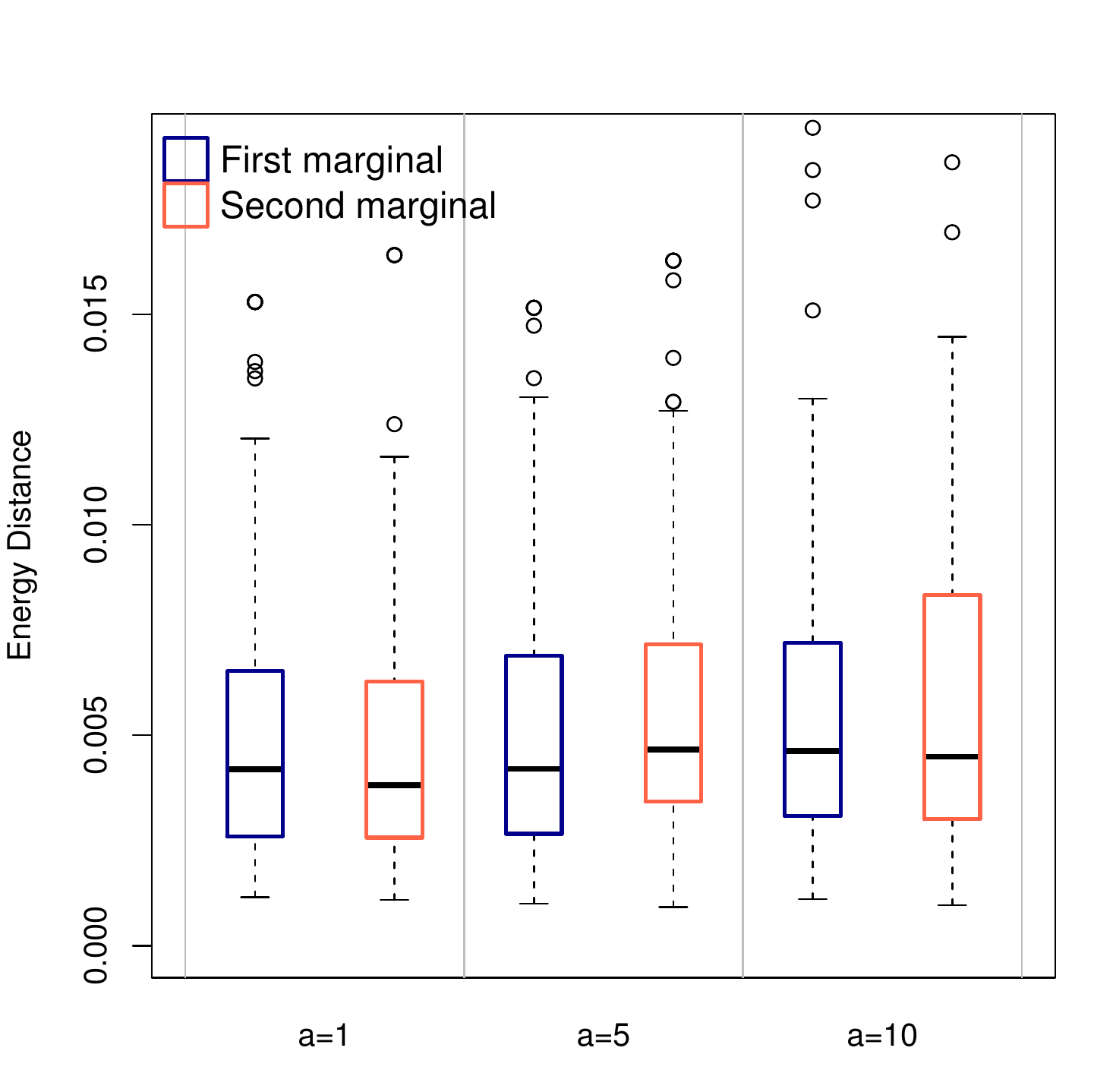} }}
    \caption{a: Boxplots of the Energy distance between $F_{true}=N_{2}(\mathbf{0}_{2},A_{2})$ and $F^{\ast}$ for $a=1,\,5,\,10$ and $n=1000$. b: Boxplots of the marginal distributions.}%
    \label{box.a.norm}%
\end{figure}
\begin{figure}[ht]
    \centering
    \subfloat[$F_{true}$: True Distribution.]{{\includegraphics[width=5.2cm]{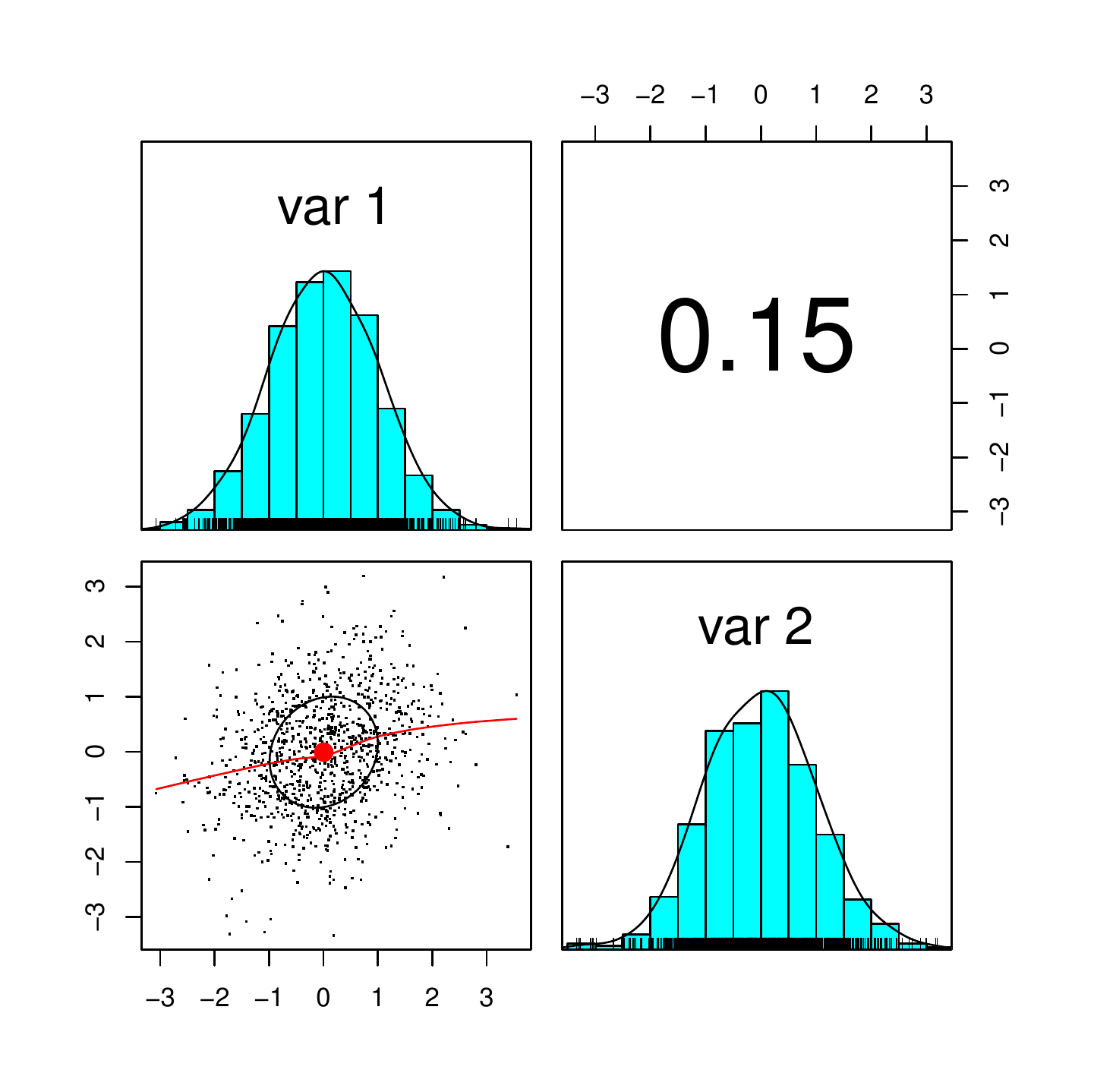} }}%
    \qquad
    \subfloat[$F^{\ast}$: Posterior-based model.]{{\includegraphics[width=5.2cm]{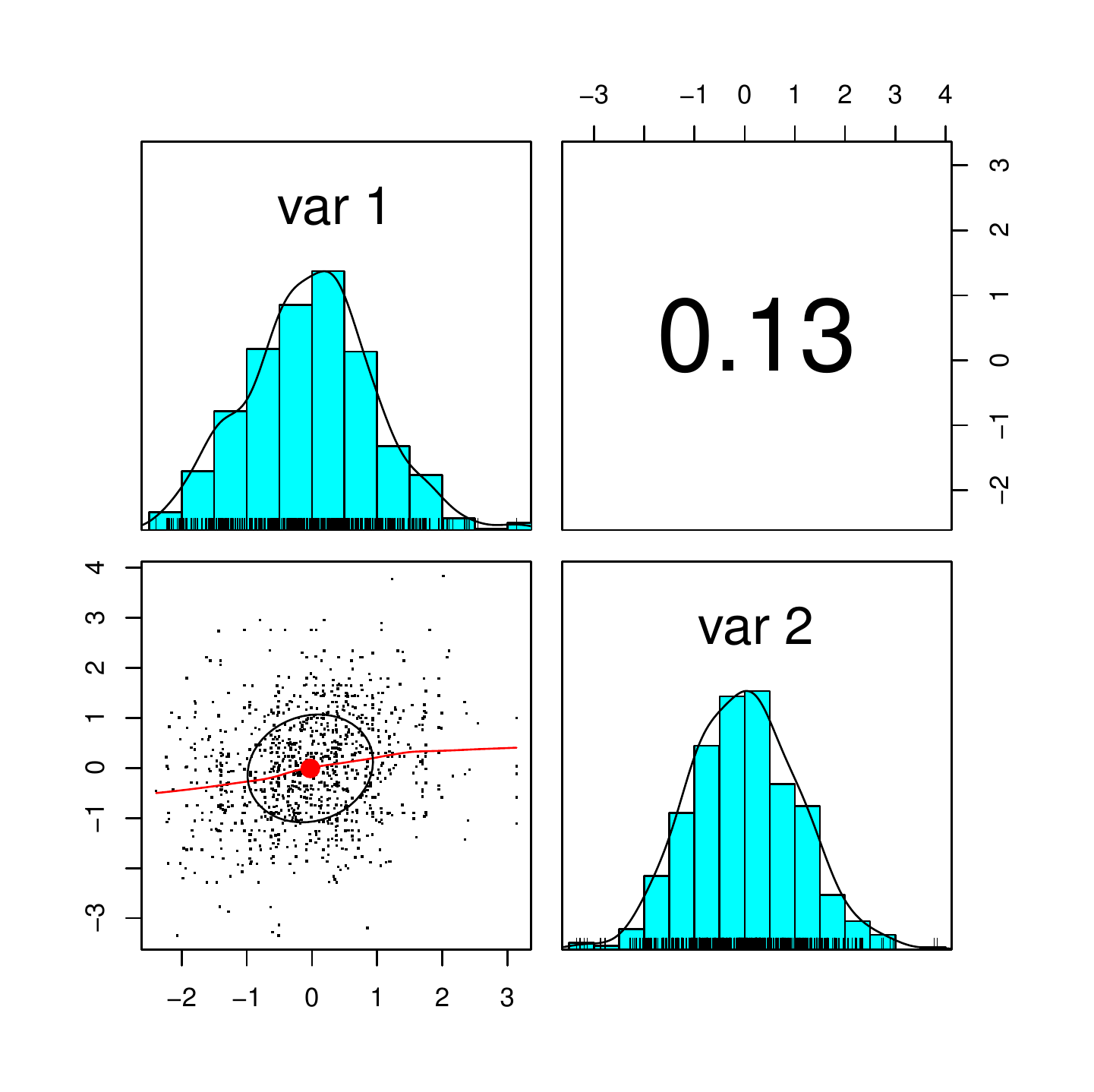} }}
    \caption{Marginal densities of $F_{true}=N_{2}(\mathbf{0}_{2},A_{2})$ and its posterior-based model with $n=1000$ based on the Kendall's $\tau$ for $a=1$.}%
    \label{fig 1}%
\end{figure}

Next, we inspect the effect of choosing different correlation coefficients such as the Gaussian rank, the Kendall's $\tau$ and the Spearman's $\rho$ on the posterior-based model. Consider $(\mathcal{P}_{VII}(1,1,1))^{2}$ and $N_{2}(\mathbf{0}_{2},A_{2})$ as two true distributions (consult  Table \ref{notation} in Appendix A for the notations). Table \ref{correlation} reports the results for $a=1$. Note that, the $\mathsf{R}$ package \textbf{rococo} is used to estimate $R^{\ast}$ based on the Gaussian rank correlation coefficients. It follows from Table \ref{correlation} that the performance of the methodology is approximately the same for different correlation coefficients.

\setlength{\extrarowheight}{.1mm}
\begin{table}[ht]
\centering
\setlength{\tabcolsep}{3 mm}
\caption{$\overline{d}_{\mathcal{E},N}(F^{\ast},F_{true})$ based on the Gaussian rank, Kendall, and Spearman correlation coefficients for $n=1000$ and $a=1$.}\label{correlation}
\scalebox{0.86}{
\begin{tabular}{c|ccc}
\toprule
\bfseries True distribution &\bfseries Gaussian rank &\bfseries Kendall's $\tau$ &\bfseries Spearman's $\rho$  \\
\hline
\scalebox{1.1}{$(\mathcal{P}_{VII}(1,1,1))^{2}$} &\scalebox{1.1}{0.00557} &\scalebox{1.1}{0.00528} &\scalebox{1.1}{0.00567}  \\
\scalebox{1.1}{$N_{2}(\mathbf{0}_{2},A_{2})$} &\scalebox{1.1}{0.00542} &\scalebox{1.1}{0.00524} &\scalebox{1.1}{0.00556} \\
\bottomrule
\end{tabular}
}
\end{table}
\subsection{Checking MVN Based on the BSPGC Approach}
The proposed normality test is illustrated through some interesting examples discussed in Henze and Visagie (2019). Note that, $NMIX1$ is a skewed heavy-tailed and $NMIX2$ is a symetric heavy-tailed distribution. Also, $(\mathcal{P}_{VII}(1,1,r))^{2}$, for $r\geq 10$ is a symetric distribution and has very similar behavior with a bivariate normal distribution. For a given sample of size  $n=50$, generated from distributions in Table \ref{T1}, the bivariate normality assumption is checked. For all cases, we set $N=r=1000$ and $M=20$ in Algorithm 3. To study the sensitivity of the approach, various values of $a$ are considered. The results of the proposed test are reported in Table \ref{T1}. The results are also compared to the Energy (E)-test (Sz\'{e}kely and Rizzo, 2013). Reminding that we want $RB>1$ and the strength close to 1 when $\mathcal{H}%
_{0}$ is true and $RB<1$ and the strength close to 0 when $\mathcal{H}_{0}$ is
false,  it is seen from  Table \ref{T2} that the proposed test has an excellent performance to accept or reject the bivariate normality assumption. The type I error and the power of the test are also reported in Table \ref{T2}. They show that the proposed test is powerful in both accepting and rejecting $\mathcal{H}_{0}$.

The next example uses a real data set.

\noindent\textbf{Real data example} (\emph{Swiss Heads}): In this example, we consider the data of six readings on the dimensions of the heads of 200 twenty year old Swiss soldiers given by Flury and Riedwyl (1988). The variables are minimal frontal breadth, breadth of angulus mandibulae, true facial height, length from glabella to apex nasi, length from tragion to nasion, and length from tragion to gnathion. The problem is to assess the six-variate normality assumption for this data set. The E-test's p-value is $2.2\times10^{-16}$, which shows strong evidence to reject the six-variate normality assumption. The proposed test presents $RB=0$ and strength$=0$ based on the Kendall's $\tau$ and $a=1$ which follows the methodology also presents strong evidence to reject the six-variate normality assumption.

We end this subsection by investigating   the effect of the prior-data conflict on the approach. This is in fact highlights the effect of the choice of $H$ in the prior-based model. For this, consider the results of MVN test when $F_{true}=(Exp(0.5))^{2}$ for different choices of $H$ in Table \ref{prior-data}. Clearly, when $H=F_{\boldsymbol{\theta}_{\mathbf{x}}}$ the results are correct; otherwise,  they  are incorrect.  Another concern is to check the effect of the double use of the data by considering $H$ as $F_{\boldsymbol{\theta}_{\mathbf{x}}}$ in the prior distance. Particularly, Table \ref{priori-invariant} gives the mean of the prior distance $\overline{d}_{\mathcal{E},N,a}(F_{N},H)$ for various choices of $H$. It is obvious from this table that the prior distance is invariant with respect to the choice of $H$.
\setlength{\extrarowheight}{0.1 mm}
\begin{table}[h!]
\centering
\setlength{\tabcolsep}{.9 mm}
\caption{Relative belief ratios and strength (str)-s for testing the bivariate normality assumption
with various alternatives and choices of $a$ based on the Kendall's $\tau$ with $n=50$.}\label{T1}
\scalebox{0.86}{
\begin{tabular}{ccccc||ccccc}
\toprule
{\parbox{22mm}{\center{\bfseries{Alternative\hspace{4mm} distribution}}}} &\bfseries $a$& $RB$&\bfseries  $ str$&{\parbox{22mm}{\center{\bfseries{E-test's\hspace{4mm} p-value}}}}&{\parbox{22mm}{\center{\bfseries{Alternative\hspace{4mm} distribution}}}} &\bfseries $a$& $RB$&\bfseries  $ str$&{\parbox{22mm}{\center{\bfseries{E-test's\hspace{4mm} p-value}}}}\\[.43 cm] \hline
 &\scalebox{1.1}{1} & \scalebox{1.1}{3.54} & \scalebox{1.1}{0.823} & &&\scalebox{1.1}{1} & \scalebox{1.1}{0.04} & \scalebox{1.1}{0.005} & \\[-.07cm]
\scalebox{1.1}{$N_{2}(\mathbf{0}_{2},I_{2})$}&\scalebox{1.1}{5}& \scalebox{1.1}{3.26} & \scalebox{1.1}{0.832} &\scalebox{1.1}{0.8794}&\scalebox{1.1}{$E(0.5)\otimes E(0.25)$}&\scalebox{1.1}{5} & \scalebox{1.1}{0.00} & \scalebox{1.1}{0.000} &\scalebox{1.1}{$2.2\times10^{-16}$}\\[-.07cm]
&\scalebox{1.1}{10}  & \scalebox{1.1}{2.48} & \scalebox{1.1}{0.997}  & &&\scalebox{1.1}{10}  & \scalebox{1.1}{0.00} & \scalebox{1.1}{0.000} & \\[-.01cm]\hline

 & \scalebox{1.1}{1}& \scalebox{1.1}{3.76} & \scalebox{1.1}{0.999}& &  &\scalebox{1.1}{1} & \scalebox{1.1}{0.80} & \scalebox{1.1}{0.110} & \\[-.07cm]
\scalebox{1.1}{$N_{2}(\mathbf{0}_{2},A_{2})$}&\scalebox{1.1}{5} &  \scalebox{1.1}{2.82}  & \scalebox{1.1}{0.859} & \scalebox{1.1}{0.8442}&\scalebox{1.1}{$\mathcal{S}^{2}(LN(0,0.25))$}&\scalebox{1.1}{5} & \scalebox{1.1}{0.62} & \scalebox{1.1}{0.151} &\scalebox{1.1}{$2.2\times10^{-16}$}\\[-.07cm]
&\scalebox{1.1}{10} & \scalebox{1.1}{2.22}      & \scalebox{1.1}{0.884} & &&\scalebox{1.1}{10}  & \scalebox{1.1}{0.60} & \scalebox{1.1}{0.240}  & \\[-.01cm]\hline

&\scalebox{1.1}{1} & \scalebox{1.1}{0.18} & \scalebox{1.1}{0.017} && &\scalebox{1.1}{1}  & \scalebox{1.1}{2.92} & \scalebox{1.1}{0.854} & \\[-.07cm]
\scalebox{1.1}{$LN_{2}(\mathbf{0}_{2},B_{2})$}&\scalebox{1.1}{5} & \scalebox{1.1}{0.18} & \scalebox{1.1}{0.033} &\scalebox{1.1}{$2.2\times10^{-16}$}&\scalebox{1.1}{$(\mathcal{P}_{VII}(1,1,10))^{2}$}&\scalebox{1.1}{5} & \scalebox{1.1}{2.74} & \scalebox{1.1}{0.999} &\scalebox{1.1}{0.7035}\\[-.07cm]
&\scalebox{1.1}{10}  & \scalebox{1.1}{0.06} & \scalebox{1.1}{0.007}  && &\scalebox{1.1}{10} & \scalebox{1.1}{2.26} & \scalebox{1.1}{0.878}  & \\[-.01cm]\hline

&\scalebox{1.1}{1} & \scalebox{1.1}{0.10} & \scalebox{1.1}{0.000} & &&\scalebox{1.1}{1} & \scalebox{1.1}{0.34} & \scalebox{1.1}{0.017} & \\[-.07cm]
\scalebox{1.1}{$NMIX1$}&\scalebox{1.1}{5} & \scalebox{1.1}{0.02} & \scalebox{1.1}{0.002} &\scalebox{1.1}{0.0050}&\scalebox{1.1}{$(\chi^{2}_{5})^{2}$}&\scalebox{1.1}{5} & \scalebox{1.1}{0.22} & \scalebox{1.1}{0.026} &\scalebox{1.1}{0.03518}\\[-.07cm]
&\scalebox{1.1}{10}  & \scalebox{1.1}{0.00} & \scalebox{1.1}{0.000}  & &&\scalebox{1.1}{10} & \scalebox{1.1}{0.04} & \scalebox{1.1}{0.002}  & \\[-.01cm]\hline

&\scalebox{1.1}{1} & \scalebox{1.1}{3.64} & \scalebox{1.1}{1.000} && &\scalebox{1.1}{1} & \scalebox{1.1}{0.58} & \scalebox{1.1}{0.099} & \\[-.07cm]
\scalebox{1.1}{$NMIX2$}&\scalebox{1.1}{5} & \scalebox{1.1}{3.32} & \scalebox{1.1}{0.834} &\scalebox{1.1}{0.8744}&\scalebox{1.1}{$N(0,1)\otimes \chi^{2}_{5}$}&\scalebox{1.1}{5} & \scalebox{1.1}{0.40} & \scalebox{1.1}{0.070} &\scalebox{1.1}{0.4020}\\[-.07cm]
&\scalebox{1.1}{10}  & \scalebox{1.1}{2.34} & \scalebox{1.1}{0.874} &&&\scalebox{1.1}{10} & \scalebox{1.1}{0.30} & \scalebox{1.1}{0.095}  & \\[-.01cm]\hline

&\scalebox{1.1}{1} & \scalebox{1.1}{0.80} & \scalebox{1.1}{0.210} && &\scalebox{1.1}{1} & \scalebox{1.1}{0.63} & \scalebox{1.1}{0.160} & \\[-.07cm]
\scalebox{1.1}{$t_{5}(\mathbf{0}_{2},I_{2})$}&\scalebox{1.1}{5} & \scalebox{1.1}{0.60} & \scalebox{1.1}{0.181} &\scalebox{1.1}{0.0452}&\scalebox{1.1}{$N(0,1)\otimes t_{3}$}&\scalebox{1.1}{5} & \scalebox{1.1}{0.44} & \scalebox{1.1}{0.092} &\scalebox{1.1}{0.0502}\\[-.07cm]
&\scalebox{1.1}{10} & \scalebox{1.1}{0.20} & \scalebox{1.1}{0.010}  & &&\scalebox{1.1}{10} & \scalebox{1.1}{0.40} & \scalebox{1.1}{0.020}  & \\[-.01cm]\bottomrule

\end{tabular}
}
\end{table}

\setlength{\extrarowheight}{.1mm}
\begin{table}[ht]
\centering
\setlength{\tabcolsep}{7.5 mm}
\caption{The proportion of rejecting (POR) $\mathcal{H}_{0}$ out of 1000 replications based on the Kendall's $\tau$ for $a=1$ and sample of size $n=50$.}\label{T2}
\scalebox{0.86}{
\begin{tabular}{cc||cc}
\toprule\\[-.4cm]
\bfseries Distribution & \bfseries POR $\mathcal{H}_{0}$&\bfseries Distribution&\bfseries POR $\mathcal{H}_{0}$\\\hline
\scalebox{1.2}{$N_{2}(\mathbf{0}_{2},I_{2})$}&\scalebox{1.2}{0.057}$^{\dagger}$& \scalebox{1.2}{$NMIX1$} &\scalebox{1.2}{0.794}$^{\ddagger}$ \\

\scalebox{1.2}{$N_{2}(\mathbf{0}_{2},A_{2})$}&\scalebox{1.2}{0.070$^{\dagger}$}&  \scalebox{1.2}{$NMIX2$}  & \scalebox{1.2}{0.077$^{\ddagger}$} \\

\scalebox{1.2}{$LN_{2}(\mathbf{0}_{2},B_{2})$}&\scalebox{1.2}{0.801$^{\ddagger}$}&  \scalebox{1.2}{$(\mathcal{P}_{VII}(1,1,10))^{2}$}  & \scalebox{1.2}{0.106$^{\ddagger}$} \\

\scalebox{1.2}{$\mathcal{S}^{2}(LN(0,0.25))$}&\scalebox{1.2}{0.798$^{\ddagger}$}& \scalebox{1.2}{$(\chi^{2}_{5})^{2}$} & \scalebox{1.2}{0.782$^{\ddagger}$} \\

\scalebox{1.2}{$t_{5}(\mathbf{0}_{2},I_{2})$}&\scalebox{1.2}{0.499$^{\ddagger}$}&  \scalebox{1.2}{$N(0,1)\otimes \chi^{2}_{5}$}  & \scalebox{1.2}{0.657$^{\ddagger}$} \\

\scalebox{1.2}{$E(0.5)\otimes E(0.25)$}&\scalebox{1.2}{0.999$^{\ddagger}$}&  \scalebox{1.2}{$N(0,1)\otimes t_{3}$}  & \scalebox{1.2}{0.551$^{\ddagger}$} \\\bottomrule

\end{tabular}
}
\centering
\begin{tablenotes}
      \small
      \item \fontsize{8.6}{8}\selectfont{$^{\dagger}$ Type I error of the proposed test.}
      \item $^{\ddagger}$ Power of the proposed test.
    \end{tablenotes}
\end{table}
\setlength{\extrarowheight}{0 mm}
\begin{table}[H]
\centering
\setlength{\tabcolsep}{5 mm}
\caption{RB(Strength) of a sample of size 50 generated from $(E(0.5))^{2}$ when there is prior-data conflict (a tiny overlap between the effective support regions).}\label{prior-data}
\scalebox{0.86}{
\begin{tabular}{c|ccc}
\toprule
$H$ &$a=1$ &$a=5$ &$a=10$  \\
\hline
\scalebox{1.1}{$F_{\boldsymbol{\theta}_{\mathbf{x}}}$} &\scalebox{1.1}{0.080(0.004)} &\scalebox{1.1}{0.040(0.006)} &\scalebox{1.1}{0.00(0.000)}  \\
\scalebox{1.1}{$N_{2}(\mathbf{x},I_{2})$} &\scalebox{1.1}{18.46(1.000)} &\scalebox{1.1}{9.180(0.989)} &\scalebox{1.1}{9.08(0.969)} \\
\scalebox{1.1}{$N_{2}(\mathbf{3}_{2},S_{\mathbf{x}})$} &\scalebox{1.1}{19.00(1.000)} &\scalebox{1.1}{10.71(1.000)} &\scalebox{1.1}{8.32(0.582)} \\
\bottomrule
\end{tabular}
}
\end{table}
\setlength{\extrarowheight}{.5mm}
\begin{table}[ht]
\centering
\begin{small}
\setlength{\tabcolsep}{5 mm}
\caption{The mean of the prior distance $\overline{d}_{\mathcal{E},N,a}(F_{N},H)$ for various choices of $H$ when $\mathbf{x}_{2\times 50}\sim (E(0.5))^{2}$ based on the Kendall's $\tau$ with $N=1000$ and $a=1$.}\label{priori-invariant}
\scalebox{.86}{
\begin{tabular}{cc||cc}
\toprule
\scalebox{1.2}{$H$} &\scalebox{1.2}{$\overline{d}_{\mathcal{E},N,a}(F,H)$} & \scalebox{1.2}{$H$} & \scalebox{1.2}{$\overline{d}_{\mathcal{E},N,a}(F,H)$} \\
\hline
\scalebox{1.2}{$F_{\boldsymbol{\theta}_{\mathbf{x}}}$} & \scalebox{1.2}{0.00672} & \scalebox{1.2}{$N_{2}(\mathbf{3}_{2},A_{2})$} & \scalebox{1.2}{0.00633}  \\
\scalebox{1.2}{$N_{2}(\mathbf{0}_{2},I_{2})$} & \scalebox{1.2}{0.00661} & \scalebox{1.2}{$NMIX2$} & \scalebox{1.2}{0.00658} \\
\bottomrule
\end{tabular}
}
\end{small}
\end{table}

\section{Concluding Remarks}
A BSPGC approach and its application to the MVN test have been suggested. In this procedure, a Gaussian copula model has been utilized to induce the dependence structure of the underlying multivariate distribution $F_{true}$. The Dirichlet process then has been constructed on the unknown margins of $F_{true}$ to define the prior-based and posterior-based models, respectively. The test has been developed by using the relative belief ratio for comparing the concentration of the distribution of the distance between the posterior-based model and the null distribution versus the concentration of the distribution of the distance between the prior-based model and the null distribution at zero. The Energy distance has been applied to compute distances as an appropriate tool especially in high dimensional problems. The methodology has been examined by a simulation study to clarify its excellent performance. Finally, application of the test including a real data example has been presented. A main advantage of the procedure is that it takes into account the dependence structure of the data in the MVN test. The extension of the procedure to different areas of the multivariate data analysis by considering various families of copula will be a part of a future research work.

\begin{appendices}
\section{Relevant Notations}
\begin{small}
\setlength{\extrarowheight}{.7mm}
\begin{table}[ht]
\centering
\caption{Description of notations}\label{notation}
\scalebox{.81}{
\begin{tabular}{l}
\toprule
\textbf{Notation:} Description  \\
\hline
1. $\mathbf{c}_{2}:=(c,c)^{T}$, $I_{2}:=\bigl(\begin{smallmatrix} 1&0\\0&1\end{smallmatrix}\bigr)$, $A_{2}:=\bigl(\begin{smallmatrix} 1&0.2\\0.2&1\end{smallmatrix}\bigr)$ and $B_{2}:=\bigl(\begin{smallmatrix} 0.25&0.2\\0.2&0.025\end{smallmatrix}\bigr)$. \\
{\parbox{16cm}{2. $E(\lambda)$: An exponentional distribution with rate $\lambda$.}}\\
{\parbox{16cm}{3. $t_{r}$: A $t$-Studen distribution with $r$ degrees of freedom.}}\\
{\parbox{16cm}{4. $B(\alpha,\beta)$: A Beta distribution with shape 1 parameter $\alpha$ and shape 2 parameter $\beta$.}}\\
{\parbox{16cm}{5. $\chi_{r}$: A chi-square distribution with $r$ degrees of freedom.}}\\
{\parbox{16cm}{6. $\mathcal{P}_{VII}(1,1,r)^{\star}$: A pearson type $VII$ (aka $t$-Student) distribution with location parameter 1, scale parameter 1 and  $r$ degrees of freedom.}}\\
{\parbox{16cm}{7. $F_{1}\otimes F_{2}$: A bivariate distribution with two independent marginal distributions $F_{1}$ and $F_{2}$.}}\\
{\parbox{16cm}{8. $t_{r}(\mathbf{0}_{2},I_{2})^{\dagger}$: A bivariate $t$-student distribution with location parameter $\mathbf{0}_{2}$, scale parameter $I_{2}$ and $r$ degrees of freedom.}}\\
{\parbox{16cm}{9. $LN_{2}(\mathbf{0}_{2},B_{2})^{\ddagger}$: A bivariate lognormal distribution with mean vector $\mathbf{0}_{2}$ and covariance matrix $B_{2}$.}}\\
{\parbox{16cm}{10. $\mathcal{S}^{2}(LN(0,0.25))^{\dagger}$: A bivariate spherical distribution with lognormal distribution $LN(0,0.25)$ for radii.}}\\
11. $NMIX1^{\dagger}$: $0.9N_{2}(\mathbf{0}_{2},I_{2})+0.1N_{2}(\mathbf{3}_{2},I_{2})$\\
12. $NMIX2^{\dagger}$: $0.9N_{2}(\mathbf{0}_{2},A_{2})+0.1N_{2}(\mathbf{0}_{2},I_{2})$.
\\
\bottomrule
\end{tabular}
}
\centering
\begin{tablenotes}
      \small
      \item \fontsize{8}{8}\selectfont{$^{\star}$ Required $\mathsf{R}$ package: PearsonDS.}
      \item $^{\dagger}$ Required $\mathsf{R}$ package: distrEllipse.
       \item $^{\ddagger}$ Required $\mathsf{R}$ package: compositions.
    \end{tablenotes}
    \end{table}
    \end{small}

\end{appendices}

\end{document}